\documentclass[sigplan,10pt,nonacm]{acmart}
\renewcommand\footnotetextcopyrightpermission[1]{}
\usepackage{amsmath}
\usepackage{graphicx}
\usepackage{xcolor}
\usepackage{listings}
\usepackage{xspace}
\usepackage{algpseudocode}
\usepackage{enumitem}
\usepackage{multirow}
\usepackage{tabularx, booktabs}
\usepackage{subcaption}
\usepackage{makecell}
\usepackage[scaled=0.85]{beramono}
\usepackage{pgfplots}
\usepackage{tikz}
\usepackage{algorithm}
\usepackage[skins]{tcolorbox}

\usepackage{pifont}
\usepackage{pgfgantt}
\usepackage{bibunits}

\definecolor{csawblue}{RGB}{0,0,139}

\renewcommand{\paragraph}[1]

\newcommand{\stitle}[1]{\noindent{{\bf #1}}}

\definecolor{dkgreen}{rgb}{0,0.6,0}
\definecolor{gray}{rgb}{0.5,0.5,0.5}
\definecolor{mauve}{rgb}{0.58,0,0.82}
\definecolor{awesome}{rgb}{1.0, 0.13, 0.32}
\definecolor{beaver}{rgb}{0.62, 0.51, 0.44}
\definecolor{carrotorange}{rgb}{0.93, 0.57, 0.13}
\definecolor{chocolate}{rgb}{0.82, 0.41, 0.12}
\definecolor{copper}{rgb}{0.72, 0.45, 0.2}
\definecolor{crimsonglory}{rgb}{0.75, 0.0, 0.2}

\theoremstyle{definition}

\newtheorem{theorem}{Theorem}

\newtheorem{definition}{Definition}

\lstdefinestyle{base}{
  language=C,
  emptylines=1,
  breaklines=true,
  basicstyle=\ttfamily\color{blue},
  moredelim=[is][\color{red}]{@}{@},
}

\usepackage[normalem]{ulem}

\begin{document}
\usepgfplotslibrary{groupplots}

\newcommand{\projectname}{{TIDE-MC}\xspace}

\title{\projectname: Two-Sided Interpolative Decomposition for Billion-Scale GPU Matrix Completion}
\author{%
\parbox{0.99\textwidth}{%
\centering
\normalsize
Chengying Huan$^{1}$,
Yubo Wang$^{2}$,
Pinhuan Wang$^{1}$,
Lizheng Chen$^{1}$,
Jie Zhang$^{3}$\\[2pt]
Fangxin Liu$^{4}$,
Qing Wang$^{1}$,
Ruixuan Liu$^{5}$,
Shaonan Ma$^{6}$,
Mingxing Zhang$^{7}$\\[2pt]
Zhibin Wang$^{1}$,
Rong Gu$^{1}$,
Guihai Chen$^{1}$,
Chen Tian$^{1}$\\[5pt]
\normalsize
$^{1}$Nanjing University \quad
$^{2}$Northeastern University (China) \quad
$^{3}$Peking University\\[2pt]
$^{4}$Shanghai Jiao Tong University \quad
$^{5}$Dalian Maritime University \quad
$^{6}$Qiyuan Lab \quad
$^{7}$Tsinghua University
}}

\renewcommand{\shortauthors}{Huan et al.}

\begin{abstract}

Matrix completion supports large-scale recommendation and scientific computing, yet existing GPU solvers commonly assume that the observed matrix or its dense factors fit in device memory.
On real workloads, this assumption leads to out-of-memory failures or severe PCIe overhead under naive paging.

We present \projectname, a bounded-memory GPU framework built on Two-Sided Interpolative Decomposition (TSID).
TSID uses a sampled template submatrix as an anchor for reconstructing the full low-rank matrix, allowing computation and storage to scale with the template and active data chunks rather than the complete matrix.
\projectname realizes this formulation through two execution stages.
First, a conflict-free synchronization engine recovers the template using parallel factorization and hierarchical gradient aggregation.
Second, a chunked reconstruction pipeline extends the recovered template to the remaining matrix while overlapping PCIe transfers with GPU computation.
An asymmetric gradient-clipping scheme stabilizes mixed-precision Tensor Core execution.

Across 15 benchmarks, \projectname completes workloads that cause existing GPU solvers to run out of memory.
Compared with the evaluated state-of-the-art baselines, it achieves up to $11{,}647\times$ speedup, reduces peak memory usage by up to $8.5\times$, and lowers reconstruction error by up to $99.7\%$.
These results show that template-anchored decomposition and stage-specific GPU execution can scale matrix completion beyond device-memory capacity.

\end{abstract}

\pagestyle{plain}

\maketitle

\section{Introduction}

Matrix completion has become a cornerstone of modern data-driven applications, such as collaborative filtering~\cite{baby2024online,feng2020fusion,borgs2017thy}, genomic data analysis~\cite{cai2016structured,natarajan2014inductive,kapur2016gene,chi2013genotype,mongia2019mcimpute}, network reconstruction~\cite{sadeghi2019signal}, graph neural networks~\cite{10.1145/3604915.3610654}, information processing systems~\cite{NEURIPS2024_626ab938}, image reconstruction~\cite{Wang_2016_CVPR,hu2012fast,cao2014image,krajewska2024matrix,krajewska2026randomized,radhakrishnan2022simple,shin2014calibrationless}, industrial recommendation systems~\cite{bennett2007netflix,koren2009matrix}, and multi-label learning~\cite{Li_Zhu_Wang_Zhang_Lai_Wang_2025}. Formally, matrix completion involves recovering a complete matrix \(\widehat{R}\) from a sparse set of observations \(\Omega\) of an underlying matrix \(R \in \mathbb{R}^{m \times n}\). In practice, this is typically achieved by exploiting low-rank structure,e.g., 
approximating \(\widehat{R}\) or a candidate completion with low-dimensional
matrices \(P\) and \(Q\)
while optimizing only over the observed entries.

In real-world scenarios, matrix completion often operates on enormous user-item matrices or high-dimensional data tables, of which only a tiny fraction of entries are observed. Such matrices are typically \emph{sparse} and approximately \emph{low-rank}~\cite{babacan2012sparse,tao2011recovering}. For example, users in recommendation systems rate only a small subset of items~\cite{NetflixPrize,yahoo2023dataset}; in computer vision, sensor measurements may miss pixels or views; and in genomic or network analysis, only a subset of possible interactions is measured~\cite{serra2018robust}. These shared characteristics make matrix completion both practically important and increasingly difficult to scale as data dimensions continue to grow.

As these applications grow to industrial and scientific scales, matrix completion becomes a large-scale systems and HPC challenge in addition to an optimization problem. Modern deployments must handle matrices with billions of potential entries under strict resource constraints, which further amplifies the need for scalable techniques in large-scale data processing systems~\cite{narayanan2018gandiva,jeon2021zico,rajbhandari2020zero,shoeybi2021megatron,zheng2022alpa}. For example, Netflix manages user--item rating matrices with over 100 million observed entries~\cite{bennett2007netflix}, while Yahoo! Music processes ratings from 200{,}000 users across 136{,}000 songs, amounting to over 27 billion potential entries~\cite{yahoo2023dataset}. Under such scale, practical matrix completion must operate within limited memory while sustaining high throughput and low latency for near real-time imputation.

Traditional matrix completion methods largely fall into two families: nuclear norm minimization (NN)-centric approaches~\cite{krajewska2024matrix,hu2012fast} and singular value decomposition (SVD)-centric approaches~\cite{abdi2007singular,2014Matrix,2017Matrix,2014Performance}. NN-centric methods recover missing entries by minimizing the nuclear norm of the matrix, either directly~\cite{shin2014calibrationless} or after reducing the matrix via column selection, as in CSNN~\cite{krajewska2024matrix}. SVD-centric methods operate on low-rank factorizations (e.g., \(A = U \Sigma V^T\)) and use iterative solvers such as CSPGD and CGM~\cite{zier2024adapting}, kernel-based approximations like NTK~\cite{radhakrishnan2022simple}, or block-wise partitioning as in MVGMC~\cite{koohi2019parallel} to improve scalability. Table~\ref{tab:all} summarizes the time and space complexity of the above methods and shows that, despite these optimizations, their costs still scale unfavorably with the matrix size, limiting their applicability to large-scale matrix completion.

\begin{table}[t]
\centering
\caption{Comparison of time and resident space complexity for prior systems and \projectname. The matrix is in \(\mathbb{R}^{n \times n}\); \(\alpha\) is the template sampling rate (\(0.1\)--\(0.3\)); \(\delta\) is the observed-entry fraction(\(8\%\)--\(20\%\)); \(k\) is the latent rank, approximately \(\alpha n\); \(t_1,t_2\) are the iteration counts of Stage~I and Stage~II(ranging from 100 to 10{,}000); and \(\Delta c\) is the chunk width. For \projectname, space denotes the peak resident GPU working set rather than the total streamed input size.}
\label{tab:all}
\small
\resizebox{\linewidth}{!}{
\begin{tabular}{|c|c|c|c|c}
\hline
\textbf{Type} & \textbf{Name} & \textbf{Time} & \textbf{Space} \\
\hline
\multirow{2}{*}{\begin{tabular}[c]{@{}c@{}}NN-\\ Centric\end{tabular}} 
& NN~\cite{shin2014calibrationless} & $\mathcal{O}(k n^3)$ & $\mathcal{O}(n^2)$ \\
& CSNN~\cite{krajewska2024matrix} & $\mathcal{O}(\alpha^2n^3)$ & $\mathcal{O}(\alpha n^2)$  \\
\hline
\hline
\multirow{4}{*}{\shortstack{SVD-\\ Centric}} 
& CSPGD~\cite{krajewska2024matrix} & $\mathcal{O}(\alpha n^3)$ & $\mathcal{O}(\alpha^2 n^3)$ \\
& CGM~\cite{krajewska2024matrix} & $\mathcal{O}(kn^2)$ & $\mathcal{O}(n^2)$ \\
& NTK~\cite{radhakrishnan2022simple} & $\mathcal{O}(k n^2)$ & $\mathcal{O}(\delta n^2)$ \\
& MVGMC~\cite{koohi2019parallel} & $\mathcal{O}(k n^3(1 + \delta))$ & $\mathcal{O}(n^2)$   \\
\hline
\hline
\multicolumn{2}{|c|}{\textbf{Our System} \textbf{\projectname}}
& $\mathcal{O}(k\delta n^2(t_2+\alpha t_1))$
& $\mathcal{O}(k(1+\alpha)n + |\Omega_{\Delta c}| + k\Delta c)$ \\
\hline
\end{tabular}
}

\end{table}

\noindent\textbf{Limitations as Computing Systems.}
Although NN-centric and SVD-centric methods differ mathematically, modern implementations of both remain monolithic whole-matrix engines: they assume that the full sparse matrix, or large dense factors or Gram matrices, reside entirely in GPU high-bandwidth memory (HBM) and organize computation around whole-matrix operators. This design is the main source of scalability problems, because it tightly couples the \emph{logical} problem size to the \emph{physical} memory capacity of the hardware. Once the matrix exceeds HBM capacity, such systems either hit an out-of-memory error or become stall-bound on PCIe under naive paging or offloading.

\noindent\textbf{Key Insight.}
Rather than proposing another optimizer, we seek a resource-reduction primitive for matrix completion. We introduce Two-Sided Interpolative Decomposition (TSID), which provides an exact-rank skeleton identity showing that a well-conditioned sampled template can serve as an algebraic anchor for reconstructing the ideal rank-\(k\) component, while practical error is governed by the rank-\(k\) residual, template recovery error, and extension error. TSID therefore reduces a large completion problem to a smaller resident subproblem and yields a template--extension structure.

\noindent\textbf{Our Design.} Building on this abstraction, we use TSID as the foundation of a submatrix-centric scheme for matrix completion: instead of requiring the entire matrix to reside in HBM, we keep only a template submatrix resident on the GPU and stream the remaining columns (or rows) as chunks.
This decouples the logical matrix size from the resident working set: the working set scales linearly with the matrix dimension, while the logical problem remains quadratic. On top of this scheme, we present \projectname, a framework for matrix completion that exposes bounded-memory completion as a reusable service. \projectname replaces the monolithic execution model of prior solvers with a two-stage bounded-memory pipeline~\cite{li2014ps,narayanan2018gandiva,zheng2022alpa,narayanan2019pipedream,rajbhandari2020zero,shoeybi2021megatron,jeon2021zico} that balances compute density and memory bandwidth.
Our specific contributions are:
\begin{itemize}[leftmargin=*,topsep=0pt,itemsep=0pt,parsep=0pt]
  \item \textbf{TSID-Guided Abstraction.}
  We elevate TSID from a mathematical result to a resource-reduction primitive for matrix completion. Guided by TSID, we introduce a two-stage design: Stage~I recovers a template submatrix, and Stage~II extends it to the full matrix under a fixed memory budget.

  \item \textbf{Bounded-Memory Two-Stage Execution Framework.}
We design a TSID-guided execution framework. Stage~I recovers the template submatrix using a conflict-free parallel SGD-based RMF kernel with gradient aggregation, reducing atomic contention and mapping irregular updates to dense Tensor Core tiles. Stage~II reconstructs the full matrix via roofline-guided chunked alternating least squares (ALS), using double buffering and autotuned chunk sizes to overlap PCIe transfers with computation and keep GPUs near the roofline (Section~\ref{sec:opt1} \& Section~\ref{sec:opt2}).

  \item \textbf{Numerical Guardrail for Mixed-Precision Execution.}
  To accelerate calculations, we use FP16/FP32 mixed precision in both stages, and for safe exploitation, we utilize asymmetric gradient clipping as a numerical guardrail. It caps gradient explosion while allowing shrinkage, preventing FP16 overflow and NaNs on power-law workloads and enabling stable Tensor Core execution(Section~\ref{sec:opt3}).

  \item \textbf{Comprehensive Evaluation at Billion Scale.}
We benchmark \projectname\ against state-of-the-art matrix completion methods on 8 synthetic datasets and 7 real datasets. \projectname delivers up to $11{,}647\times$ speedup, $8.5\times$ lower peak memory usage, and 99.7\% lower reconstruction error on billion-scale matrices, whereas competing frameworks either time out or run out of memory (Section~\ref{sec:eva}).
\end{itemize}

\section{Background}
\label{sec:background}


Matrix completion aims to recover the missing entries of a partially observed matrix. Formally, we have the definition:

\begin{definition}[Matrix Completion]
Given an incomplete matrix \( R \in \mathbb{R}^{m \times n} \) with observed
entries indexed by \(\Omega\), the matrix completion problem seeks to recover
a full matrix \(\widehat{R}\in\mathbb{R}^{m\times n}\) such that \(\widehat{R}_{ij}=R_{ij}\) for all \((i,j)\in\Omega\), and \(\widehat{R}\) accurately estimates the unobserved entries.
\end{definition}


Traditional matrix completion methods largely fall into two categories: nuclear norm minimization (NN)-centric methods and singular value decomposition (SVD)-centric methods in practice.

\subsection{NN-Centric Methods}

NN-centric approaches, such as nuclear norm minimization (NN) and
Column-Selected Nuclear Norm minimization (CSNN), recover low-rank matrices by convex rank surrogates. 
Figure~\ref{current}(a) illustrates a NN-centric pipeline: given
\(R\), the solver iteratively computes gradients over the observed entries and
applies SVT-based shrinkage, while assuming that the full matrix resides in device memory.

\noindent\textbf{NN}~\cite{shin2014calibrationless} relaxes rank
minimization by replacing the rank with the sum of singular values. Given
\(\Omega \subseteq [n_1]\times[n_2]\), NN solves
\begin{equation*}
    \widehat{R}
= \arg\min_{X \in \mathbb{R}^{n_1 \times n_2}} \|X\|_{*}
\quad \text{s.t.} \quad
\mathcal{R}_{\Omega}(X - R) = 0,
\end{equation*}
where \(\|X\|_{*}\) denotes the nuclear norm and \(\mathcal{R}_{\Omega}\)
retains observed entries. This problem can be formulated as a semidefinite
program~\cite{parrilo2003semidefinite,nesterov2000semidefinite} and is often
solved by singular value thresholding (SVT)~\cite{cai2010singular}, which
iteratively applies gradients and singular-value shrinkage. While NN offers
strong recovery guarantees, repeated full-matrix SVDs
\cite{lu2015generalized,cai2010singular} lead to high memory usage, limited
GPU utilization, and poor scalability.

\noindent\textbf{CSNN}~\cite{krajewska2024matrix,krajewska2026randomized} reduces this cost by applying NN to a uniformly sampled subset of \(d\) columns. Let \(I \subseteq [n_2]\) and \(C = R_{:,I} \in \mathbb{R}^{n_1 \times d}\). CSNN solves
\begin{equation*}
\widehat{C} = \arg\min_{X \in \mathbb{R}^{n_1 \times d}} \|X\|_{*}
\quad \text{s.t.} \quad \mathcal{R}_{\Omega_I}(X - C) = 0,
\end{equation*}
where \(\Omega_I = \{ (i,j) \in \Omega \mid j \in I \}\), and then reconstructs the full matrix via least squares~\cite{golub1980analysis}:
\begin{equation*}
    \widehat{Z} = \arg\min_{Z \in \mathbb{R}^{d \times n_2}}
  \frac{1}{2}\Bigl\|\mathcal{R}_{\Omega}(R)
  - \mathcal{R}_{\Omega}(\widehat{C}\,Z)\Bigr\|_F^2,
  \widehat{R} = \widehat{C}\ \widehat{Z}.
\end{equation*}
Restricting SDP computation~\cite{parrilo2003semidefinite,nesterov2000semidefinite}
to a tall-and-skinny block (\(d \ll n_2\)) lowers runtime and memory, and
enables GPU-accelerated conic solvers~\cite{zier2024adapting}.

\noindent\textbf{Limitations.}
From a systems viewpoint, NN and CSNN remain expensive and difficult to scale. NN incurs quadratic space and cubic time complexity, while CSNN still depends on dense subproblems with substantial memory overhead. Moreover, neither method provides a streamable or chunked execution model. Consequently, on large datasets such as Netflix and Yahoo! Music (Table~\ref{tab:all}), state-of-the-art NN-centric implementations either time out or exhaust GPU memory.

\begin{figure}[t]
\centering
\subfloat[NN-centric.]
{
    \begin{minipage}[b]{0.385\linewidth}
        \centering
        \includegraphics[width=\linewidth]{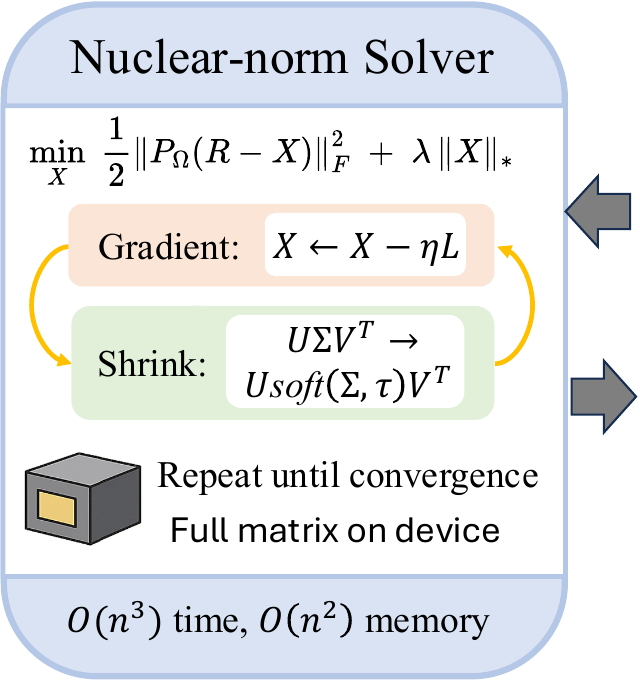}
    \end{minipage}
}
\hspace{-0.04\linewidth}
\subfloat[SVD-centric.]{
    \begin{minipage}[b]{0.579\linewidth}
        \centering
        \includegraphics[width=\linewidth]{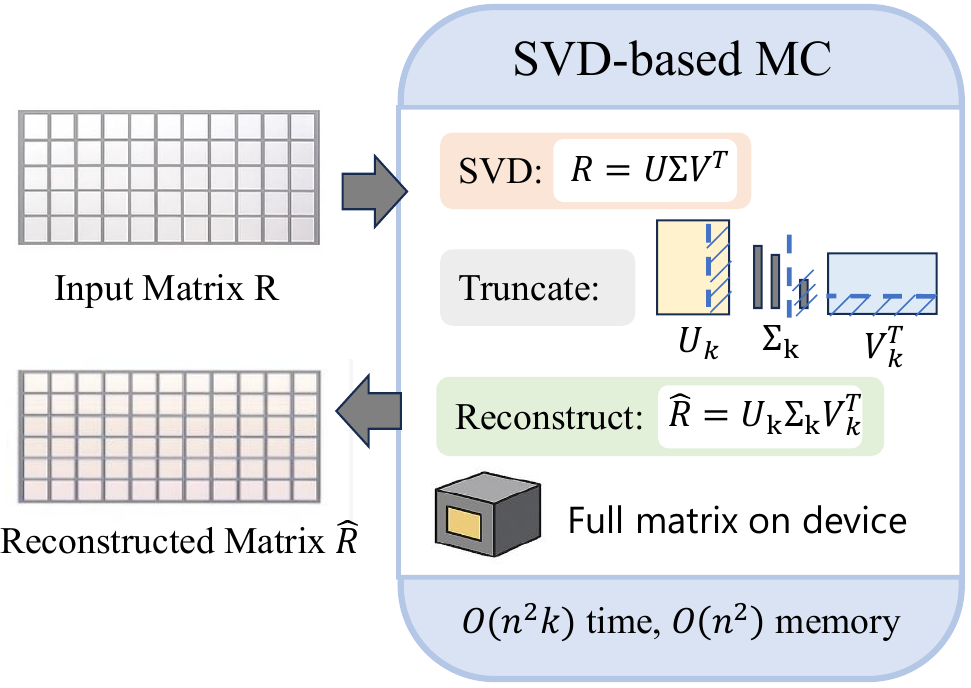}
    \end{minipage}
}
\caption{Workflow of NN-centric and SVD-centric matrix completion methods, both requiring the full matrix to reside in device memory during execution.}
\label{current}
\end{figure}

\subsection{SVD-Centric Methods}

SVD-centric methods trade some accuracy for higher computational efficiency.
Representative examples include column-selected proximal gradient descent (CSPGD), conditional gradient methods (CGM), NTK-based matrix completion (NTKMC), and multi-view graph-based matrix completion (MVGMC). Figure~\ref{current}(b) illustrates a typical SVD-centric workflow, in which a low-rank decomposition \(R \approx U \Sigma V^T\) of the entire matrix is used to reconstruct the completed matrix or enable further refinement, again assuming full-matrix residency in device memory.

\noindent\textbf{CSPGD}~\cite{krajewska2024matrix} accelerates CSNN by
solving a proximal gradient formulation~\cite{ye2020decentralized} on the
sampled submatrix:
\begin{equation*}
\min_{X\in\mathbb{R}^{n_1\times d}} \; \frac{1}{2}\left\|\mathcal{R}_{\Omega_I}(X - C)\right\|_F^2 + \lambda\,\|X\|_{*},
\end{equation*}
with iterative updates
\begin{equation*}
X^{(t+1)} = \mathrm{SVT}_{\eta \lambda}\!\ \Bigl(
X^{(t)} - \eta\,\nabla\!\Bigl[\frac{1}{2}\left\|\mathcal{R}_{\Omega_I}(X^{(t)}-C)\right\|_F^2\Bigr]
\Bigr),
\end{equation*}
where \(\mathrm{SVT}_{\tau}\) is singular value thresholding~\cite{cai2010singular}.
CSPGD avoids a full SDP~\cite{pataki2000geometry} and benefits from
GPU-accelerated tensor operations, but still requires repeated SVDs of
\(n_1\times d\) and sublinear convergence.

\noindent\textbf{CGM.}
The conditional gradient method~\cite{krajewska2024matrix}, a Frank–Wolfe
variant~\cite{wai2017decentralized,clarkson2010coresets}, optimizes
\(
f(X) = \frac{1}{2}\|\mathcal{R}_{\Omega_I}(X - C)\|_F^2
\)
over a nuclear-norm ball. At each iteration, it computes the top singular
pair \((u^{(t)}, v^{(t)})\) of \(\nabla f(X^{(t)})\) and updates
\(
X^{(t+1)} = \bigl(1 - \gamma_t\bigr) X^{(t)} + \gamma_t\,\tau\, u^{(t)}{v^{(t)}}^T.
\)
GPU-accelerated SVD makes each step efficient, but the
\(\mathcal{O}(1/t)\) rate and growing intermediate ranks can still be costly
for high-precision recovery.

\noindent\textbf{NTKMC}~\cite{radhakrishnan2022simple} derives a
problem-specific Neural Tangent Kernel and casts completion as kernel ridge
regression with closed-form predictor
\(
f^*(x) = K(x, X)\,\bigl(K(X,X) + \lambda I\bigr)^{-1} Y,
\)
where \(X\) and \(Y\) are coordinates and observed values in \(\Omega\).
Training inverts a \(|\Omega|\times|\Omega|\) Gram matrix
(\(\mathcal{O}(|\Omega|^3)\) time, \(\mathcal{O}(|\Omega|^2)\) memory), and
each prediction costs \(\mathcal{O}(|\Omega|)\). Dense kernel operations limit
scalability and GPU efficiency.

\noindent\textbf{MVGMC}~\cite{koohi2019parallel} augments
matrix completion with multi-view information, e.g., rating and item-similarity matrices, to handle sparse data and cold-start scenarios
\cite{narayanan2008robust,ocepek2015improving}. The optimization couples
rating reconstruction and similarity consistency:
\begin{equation}
    \min_{\Omega} \; \| R - P Q^T \|_F + \lambda \| P \|_F + \beta \| S - Q Q^T \|_F,
\label{eqn-multi-view}
\end{equation}
where \(\| S - Q Q^T \|_F\) encourages item factors in \(Q\) to match the
similarity matrix \(S\), and \(\lambda,\beta\) are the regularization parameters.
The similarity matrix is built from item attributes:
\(
S_{ii'} = \frac{1}{J} \sum_{j=1}^J \mathbb{I}(a_{ij} = a_{i'j}),
\)
where \(J\) is the number of attributes, and \(\mathbb{I}\) is the indicator
function.

\noindent\textbf{Limitations.}
Despite differing mathematically, these systems behave as whole-matrix
engines: they require the full matrix (or dense factors) to fit in device
memory or rely on paging that is oblivious to low-rank structure. They lack a task-aware mechanism for memory-efficient execution, motivating our search for a resource-reduction primitive that enables a submatrix-centric design with low space cost.

\section{Design Principle}
\label{sec:opportunity}

As discussed in Section~\ref{sec:background}, traditional matrix completion systems face a fundamental scalability bottleneck. In real-world scenarios, the target matrix is typically very sparse and approximately low-rank~\cite{babacan2012sparse,tao2011recovering}. As matrix dimensions grow, storage and intermediate structures often scale as \(\mathcal{O}(n^2)\), while the associated computation ranges from \(\mathcal{O}(n^2)\) to \(\mathcal{O}(n^3)\). At scales of hundreds of thousands of rows or columns, even storing dense factors or Gram matrices can exceed GPU memory. The core systems challenge is thus to carry out large scale matrix completion without assuming that the full matrix fit in accelerator memory.

A tempting approach is to treat matrix completion as a generic sparse computation and rely on paging or swapping: keep most data in CPU memory or on disk and fault in segments on demand. However, without semantic knowledge of the completion task, such mechanisms move arbitrary subsets of the matrix at the page granularity. They neither exploit low-rank structure nor respect the algebraic dependencies between columns; thus, they spend most of their time shuffling data over PCIe rather than computing. Therefore, we algebraically \emph{reduce} the completion problem such that only a small, well-chosen part of the matrix needs to reside on the GPU.

\begin{table}[t]
\centering
\caption{Illustrative comparison between the full matrix and an arbitrary subset on a synthetic low-rank matrix completion workload. (\(\alpha = 0.01\))}
\label{tab:tsid_observation}
\footnotesize
\setlength{\tabcolsep}{5pt}
\begin{tabular}{lcc}
\toprule
\textbf{Metric} & \textbf{Full matrix $R$} & \textbf{Arbitrary subset $R_s$} \\
\midrule
Shape & $m \times n$ & $\alpha( m \times n)$ \\

Top-$k$ energy retained 
& $1.00$ 
& $0.90$ \\

Subspace similarity to $R$ 
& $1.00$ 
& $0.80$ \\

Resident working set 
& high 
& low \\
\bottomrule
\end{tabular}
\end{table}
\begin{figure*}[t]
  \centering
  \includegraphics[width=0.98\linewidth]{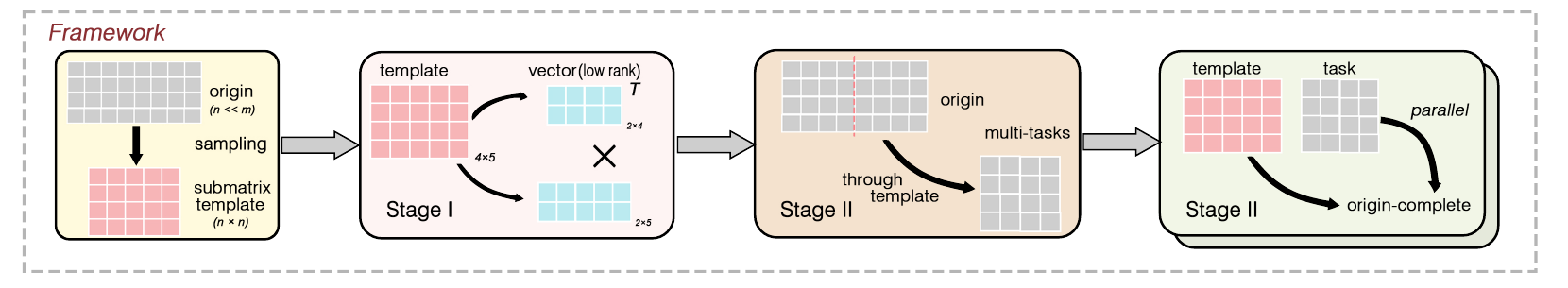}
  \caption{Overall workflow of \projectname: submatrix sampling and recovery (Stage~I) and chunk-based ALS reconstruction (Stage~II).}
  \label{algorithm}
\end{figure*}

\noindent\textbf{Observation.} 
Classical interpolative decomposition (ID) results show that \textit{low-rank matrices can be approximated by subsets of their rows or columns}. Table~\ref{tab:tsid_observation} provides empirical evidence that a suitably chosen submatrix can preserve the dominant low-rank structure of \(R\), suggesting that recovery of the full matrix may be approximated by first recovering a smaller representative submatrix and then extending it. This motivates us to develop a \textbf{Two-Sided Interpolative Decomposition (TSID)} for matrix completion. Under mild assumptions, TSID states that if sufficiently informative subsets of rows and columns are selected from \(R\), then the resulting submatrix \(R^*\) both preserves the low-rank structure of \(R\) and serves as an effective \emph{template}; furthermore, recovering \(R^*\) from the observed entries is, up to standard low-rank approximation error, equivalent to recovering \(R\).

\subsection{TSID Theoretical Basis}

\textbf{Notation and Setup.}
Let \(R\in\mathbb{R}^{m\times n}\) denote the full target matrix, and let
\[
R = U\,\Sigma\,V^T
  = \begin{bmatrix}U_1 & U_2\end{bmatrix}
    \begin{bmatrix}\Sigma_1 & 0\\[4pt]0 & \Sigma_2\end{bmatrix}
    \begin{bmatrix}V_1^T\\[4pt]V_2^T\end{bmatrix}
\]
be its singular value decomposition, where \(\Sigma_1\in\mathbb{R}^{k\times k}\) contains the top \(k\) singular values, \(U_1\in\mathbb{R}^{m\times k}\), and \(V_1\in\mathbb{R}^{n\times k}\). We write the ideal rank-\(k\) component as
\[
{
R_k = U_1\Sigma_1V_1^T .
}
\]
Select index sets \(I\subseteq\{1,\dots,m\}\) and \(J\subseteq\{1,\dots,n\}\), each of size \(k\), such that the ideal core
\[
{
R_k^* = (R_k)_{I,J}
       = U_1(I,:)\,\Sigma_1\,V_1(J,:)^T
}
\]
is nonsingular. Define the left and right interpolation matrices
\[
{
W_L = U_1\bigl(U_1(I,:)\bigr)^{-1}\in\mathbb{R}^{m\times k},
\qquad
W_R = V_1\bigl(V_1(J,:)\bigr)^{-1}\in\mathbb{R}^{n\times k}.
}
\]
The superscript \(*\) denotes the sampled template. In the ideal theorem below it refers to the fully specified rank-\(k\) core \(R_k^*\); in the practical matrix-completion algorithm, the corresponding observed template is only partially observed and must be estimated from data.

\noindent\textbf{Theorem and Proof of TSID.}
The theorem below is an exact-rank skeleton/CUR identity.  Its role is to justify why a well-conditioned template can serve as an algebraic anchor for bounded-memory extension.
\begin{theorem}[Ideal TSID skeleton identity]
\label{thm:tsid-skeleton}
If \(R_k^*=(R_k)_{I,J}\) is nonsingular, then the interpolation induced by the selected rows and columns exactly reconstructs the ideal rank-\(k\) component:
\[
{
\widetilde{R}_k
 = W_L\,R_k^*\,W_R^T
 = R_k .
}
\]
For a matrix \(R\) with nonzero residual \(R-R_k\), this identity applies only to the ideal component \(R_k\). The resulting error depends on row/column sampling, core conditioning, template recovery, and extension error.

\end{theorem}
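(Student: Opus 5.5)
The identity follows by substituting the definitions and cancelling inverses. The first step is to record what nonsingularity of the core gives us. Since
\[
R_k^* = U_1(I,:)\,\Sigma_1\,V_1(J,:)^T
\]
is a product of three $k\times k$ matrices, $\det R_k^* = \det U_1(I,:)\,\det\Sigma_1\,\det V_1(J,:)$. Hence $R_k^*$ nonsingular forces $U_1(I,:)$, $\Sigma_1$ and $V_1(J,:)$ each to be invertible. In particular $W_L$ and $W_R$ are well defined, and the top $k$ singular values are nonzero. This is the only place where the hypothesis enters.

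The main step is a direct computation:
\[
W_L R_k^* W_R^T
 = U_1\bigl(U_1(I,:)\bigr)^{-1}\,U_1(I,:)\,\Sigma_1\,V_1(J,:)^T\,\bigl(V_1(J,:)\bigr)^{-T}V_1^T .
\]
The factor $\bigl(U_1(I,:)\bigr)^{-1}U_1(I,:)$ collapses to $I_k$. Likewise $V_1(J,:)^T\bigl(V_1(J,:)^T\bigr)^{-1}=I_k$, using $\bigl(V_1(J,:)\bigr)^{-T}=\bigl(V_1(J,:)^T\bigr)^{-1}$. What remains is $U_1\Sigma_1V_1^T=R_k$.

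As a sanity check I would also note the interpolation property. We have $W_L(I,:)=I_k$ and $W_R(J,:)=I_k$, so the reconstruction agrees with $R_k$ exactly on the selected rows and columns. This makes the "skeleton/CUR" reading explicit: $W_L R_k^* W_R^T = (R_k)_{:,J}\,(R_k^*)^{-1}\,(R_k)_{I,:}$. That form follows from $(R_k)_{:,J}=U_1\Sigma_1V_1(J,:)^T$ and $(R_k)_{I,:}=U_1(I,:)\Sigma_1V_1^T$.

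There is no real obstacle. The only point needing care is the first step, deducing invertibility of the factors from invertibility of $R_k^*$, which is immediate from the determinant factorization because all three factors are square.

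The second sentence of the statement, on the nonzero residual $R-R_k$, is a qualitative remark rather than a quantitative claim, so I would address it only briefly. Replacing $R_k^*$ by the observed template $R_{I,J}=R_k^*+(R-R_k)_{I,J}$ gives
\[
W_L R_{I,J} W_R^T - R = W_L\,(R-R_k)_{I,J}\,W_R^T - (R-R_k).
\]
The error is therefore controlled by $\|R-R_k\|$ amplified by $\|W_L\|\,\|W_R\|$, which are governed by $\|U_1(I,:)^{-1}\|$ and $\|V_1(J,:)^{-1}\|$, i.e.\ by the conditioning induced by the row/column sampling. Template-recovery and extension errors then add further terms on top of this bound.
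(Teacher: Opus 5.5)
Your proposal is correct and follows the paper's proof: substitute the definitions of $W_L$, $R_k^*$ and $W_R$, then cancel $\bigl(U_1(I,:)\bigr)^{-1}U_1(I,:)$ and $V_1(J,:)^T\bigl(V_1(J,:)\bigr)^{-T}$ to obtain $U_1\Sigma_1V_1^T=R_k$. Your determinant argument makes explicit what the paper assumes implicitly when it defines $W_L$ and $W_R$: nonsingularity of $R_k^*$ forces $U_1(I,:)$ and $V_1(J,:)$ to be invertible. Your residual identity matches the paper's practical-error discussion.
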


\begin{proof}
By the definitions above,
\[
{
\begin{aligned}
\widetilde{R}_k
&= W_L\,R_k^*\,W_R^T \\
&= U_1\bigl(U_1(I,:)\bigr)^{-1}
   \bigl(U_1(I,:)\Sigma_1V_1(J,:)^T\bigr)
   \bigl(V_1(J,:)\bigr)^{-T}V_1^T \\
&= U_1\Sigma_1V_1^T
 = R_k .
\end{aligned}
}
\]
This proves the exact identity for the ideal rank-\(k\) component. If \(R\) is not exactly rank \(k\), then \(R=R_k+(R-R_k)\), and the identity above does not reconstruct the residual \(R-R_k\).
Therefore, TSID should be interpreted as a skeleton-based abstraction for reducing the resident problem size, not as an end-to-end proof of optimal sparse matrix completion.
\end{proof}

\subsection{From TSID to a Two-Stage Framework}~\label{sec:TSiD-stage}
\vspace{-2mm}

Given the TSID reduction in Theorem~\ref{thm:tsid-skeleton}, matrix completion is reorganized into two bounded-memory
stages: (i) recovering a compact \emph{template} submatrix and
(ii) extending this template to reconstruct the full target matrix.

In \textbf{Stage I (template recovery)}, a sampled template submatrix \(R^*\) is
factorized using regularized matrix factorization (RMF). Let \(\Omega^*\subseteq\Omega\) denote the observed entries that fall inside the sampled template. The submatrix is
parameterized as \(\widehat{R^*} \approx P Q^{\top}\), with
\(P \in \mathbb{R}^{m\times k}\) and \(Q \in \mathbb{R}^{\alpha n\times k}\), by
solving
\begin{equation}
\label{eq:rmf-obj}
\min_{P,Q}\;
  \sum_{(j,i)\in\Omega^*}
    \bigl(R^*_{ji} - p_j^\top q_i\bigr)^2
  + \lambda \bigl(\|P\|_F^2 + \|Q\|_F^2\bigr),
\end{equation}
where \(p_j\),\(q_i\) are rows of \(P\),\(Q\), and \(\lambda\)
controls \(\ell_2\) regularization. This objective is optimized by stochastic
gradient descent.

In \textbf{Stage~II (extension)}, the recovered template \(\widehat{R^*} \approx P
Q^{\top}\) is kept fixed, and the remaining columns of the original matrix are
reconstructed by solving regularized least-squares problems conditioned on
this template. The objective decomposes over columns, which makes it amenable
to chunked, pipelined execution under a fixed memory budget; the concrete
execution engine is described in Sections~\ref{sec:design} and~\ref{sec:opt}.

\noindent\textbf{Practical Approximation Error.}
The identity above is lossless only for the ideal component \(R_k\). To see what changes in the practical setting, write \(R=R_k+E\), where \(E\) is the residual outside the rank-\(k\) component. The sampled template of the full matrix is then \(R^*=R_{I,J}=R_k^*+E_{I,J}\). If Stage~I recovers an estimated template \(\widehat{R}^*\) and Stage~II forms \(\widehat{R}=W_L\widehat{R}^*W_R^T\), then
\[
{
\begin{aligned}
\widehat{R}-R
&= W_L(\widehat{R}^*-R_k^*)W_R^T - E,\\
\|\widehat{R}-R\|_F
&\le \|E\|_F
+ \|W_L\|_2\|W_R\|_2
\left(\|\widehat{R}^*-R^*\|_F+\|E_{I,J}\|_F\right).
\end{aligned}
}
\]
Thus TSID does not imply error-free submatrix reconstruction in general. It shows lossless recovery only under the exact-rank, fully observed, well-conditioned-template assumption. In practical sparse matrix completion, the reconstruction quality depends on the rank-\(k\) residual, the conditioning of the selected rows and columns, the sampling pattern, and the template recovery error.

This formulation also explains why the implementation does not simply set \(P=R^*\) and \(Q=I\). That choice is only a degenerate factorization when the \(k\times k\) template is fully observed, noise-free, and already known. In sparse matrix completion, the sampled template \(R^*\) is only partially observed and may be noisy. Stage~I therefore uses regularized matrix factorization to estimate a low-rank template representation \(\widehat{R^*}\approx P Q^T\) from the observed entries. Stage~II then uses the recovered template representation in a template-conditioned least-squares extension. Thus RMF/ALS-style updates are optimization mechanisms for the practical sparse setting, not mathematical requirements of the exact skeleton identity.

In practical sparse workloads, TSID should be viewed as a systems abstraction for reducing resident memory, not as a standalone guarantee of optimal matrix completion.

\subsection{Complexity}
\label{sec:tsid-complexity}

Let \(R\) be the partially observed input matrix and let \(\Omega\) be its observed-entry set. For an \(n\times n\) matrix with observed-entry density \(\delta\) and template sampling rate \(\alpha\), we have \(|\Omega|=\delta n^2\) and \(|\Omega^*|\approx \alpha\delta n^2\). Let \(k\) be the latent rank, \(t_1,t_2\) the iteration counts of Stage~I and Stage~II, and \(\Delta c\) the chunk width. The two stages require
\[
{
T_{\mathrm{I}} = O(t_1 k|\Omega^*|)
= O(t_1 k\alpha\delta n^2),
\ \
T_{\mathrm{II}} = O(t_2 k|\Omega|)
= O(t_2 k\delta n^2).
}
\]
Thus the total arithmetic work is \(O(k\delta n^2(t_2+\alpha t_1))\). This is still linear in the observed entries; \projectname does not claim sublinear input processing. Its reduction is in peak resident GPU memory: at any time, the system keeps only the template factors and one active chunk,
\[
{
S_{\mathrm{resident}}
=
O\!\left(k(1+\alpha)n + |\Omega_{\Delta c}| + k\Delta c\right),
}
\]
where \(\Omega_{\Delta c}\) is the observed-entry set in the active chunk. Therefore, the resident memory is controlled by the template size and chunk width rather than by the full matrix.

\section{\projectname Framework Overview}

\subsection{Overview}
\label{sec:design}
Section~\ref{sec:opportunity} introduced a TSID-guided Template-Extension decomposition of matrix completion at the algorithmic level. Figure~\ref{algorithm} shows how \projectname
realizes this decomposition as a bounded-memory execution workflow. The key idea is to break the dependence of the resident working set on the logical matrix size: at any time, the GPU holds only a small submatrix or a streaming chunk, rather than the full matrix.

The workflow proceeds in two phases. \projectname first samples a
subset of columns from the input sparse matrix \(R\) to form a compact
submatrix \(R^*\), and recovers \(R^*\) as a low-rank template.
This template captures the dominant structure of \(R\) while remaining
small enough to stay resident in GPU memory. \projectname then treats
the recovered template as a fixed anchor and reconstructs the remaining
columns chunk by chunk, streaming them through the GPU under a fixed
memory budget. In this way, TSID provides the algebraic basis for reducing full-matrix completion to template recovery followed by template-guided extension, while the system realizes this reduction as a submatrix-centric dataflow.

This execution model has three structural properties.
First, it bounds the resident working set to the template and one chunk at a time, enabling completion beyond GPU memory capacity.
Second, it exposes both phases through a common template-guided workflow that can be mapped to GPU linear-algebra kernels.
Third, it separates the Template-Extension abstraction from the stage-specific execution mechanisms, allowing each phase to be optimized independently.

\subsection{Main Challenges}
Although the workflow in Section~\ref{sec:design} establishes a bounded-memory execution path, a straightforward implementation is still insufficient for performance. A direct TSID-guided realization would sample a submatrix \(R^*\), apply an off-the-shelf SGD-based RMF
solver to recover a template, and then run a standard ALS solver to
reconstruct the full matrix conditioned on that template. In practice,
this naive realization still encounters three major bottlenecks.

\noindent\textbf{Challenge I: Atomic Synchronization Overhead (Stage~I).}
During template recovery, RMF must frequently and efficiently update shared latent factors. A naive GPU implementation assigns threads to individual ratings \((j,i)\) and updates user and item factors with atomic operations. On power-law datasets, popular items become hotspots, and thousands of updates contend for the same memory locations. In our profiling, this naive RMF kernel spends over 40\% of its cycles stalled on memory locks, making Stage~I synchronization-bound and unable to fully utilize the GPU.

\noindent\textbf{Challenge II: Stall-Bound Streaming Pipeline (Stage~II).}
Global reconstruction requires streaming large portions of the matrix that do not fit in GPU memory. A naive implementation performs a sequence of blocking steps: copy a chunk over PCIe, compute ALS updates on it, copy it back, and then proceed to the next chunk. On our Stage~II matrix completion microbenchmark, This exposes PCIe latency and limited bandwidth directly on the critical path: GPU SMs spend up to 65\% of runtime idle, waiting for data. Without a dedicated latency-hiding dataflow, Stage~II quickly becomes strictly bandwidth-bound.

\noindent\textbf{Challenge III: Numerical Instability Under Mixed Precision.}
To fully exploit Tensor Cores, both RMF and ALS should run in FP16 or mixed precision. However, naive SGD-based RMF is numerically brittle on sparse, skewed data. The limited dynamic range of FP16, combined with large gradients in early epochs, leads to overflow and divergence. Falling back to FP32 erodes the benefits of specialized hardware, leaving a gap between efficiency and stability.

These challenges motivate the system mechanisms of \projectname. The next section presents how we address them with a conflict-aware synchronization engine, a pipelined chunked reconstruction dataflow,
and a numerical guardrail for stable mixed-precision execution.

\section{\projectname System}
\label{sec:opt}

The Template–Extension execution model in Section~\ref{sec:design} is instantiated
in \projectname as a two-module execution engine.
The \textbf{Template module} is a conflict-aware, hierarchical update engine that
recovers the TSID template under massive GPU parallelism, and the
\textbf{Extension module} is a roofline-driven, pipelined dataflow that
reconstructs the full matrix under a bounded GPU memory budget.
Both modules share a common precision-management layer, i.e., an asymmetric gradient
guardrail that makes aggressive mixed-precision (FP16/FP32) execution numerically
stable.

\subsection{Conflict-Free Synchronization Engine}

\label{sec:opt1}
\begin{table}[t]
\centering
\caption{Submatrix sampling overhead.}
\label{tab:time}
\scriptsize
\setlength{\tabcolsep}{2pt}
\renewcommand{\arraystretch}{0.9}

\begin{subtable}[t]{0.48\linewidth}
    \centering
    \caption{Synthetic 10K$\times$10K.}

    \begin{tabular}{|p{0.39\linewidth}|c|c|}
        \hline
        \textbf{Strategy} & \textbf{Time(s)} & \textbf{$\Delta$Acc.} \\
        \hline
        Random & 0.0073 & 0 \\
        \hline
        Rank-Inc. & 33.7 & 0.00017\% \\
        \hline
        Similarity & 9.86 & 0.0008\% \\
        \hline
    \end{tabular}
    
\end{subtable}
\hfill
\begin{subtable}[t]{0.48\linewidth}
    \centering
    \caption{Netflix Prize.}

    \begin{tabular}{|p{0.39\linewidth}|c|c|}
        \hline
        \textbf{Strategy} & \textbf{Trial(s)} & \textbf{Total(s)} \\
        \hline
        Random & 0.002 & 0.007 \\
        \hline
        Rank-Inc. & 50.34 & 151 \\
        \hline
        Similarity & 39.93 & 119.8 \\
        \hline
    \end{tabular}
    
\end{subtable}
\end{table}

Stage~I is the compute-intensive core of \projectname: it repeatedly applies
regularized matrix factorization (RMF) to the sampled submatrix \(R^*\) until
convergence. This module implements the \textit{Template} phase of the
Template–Extension execution model in Section~\ref{sec:design}.
From a systems perspective, the main difficulty is designing a synchronization engine that enforces these updates on GPUs without global write conflicts or excessive atomic operations while keeping Tensor Cores highly utilized.

\noindent\textbf{SGD Updates for RMF.}
We adopt the standard RMF objective in Equation~\ref{eq:rmf-obj} and optimize
it with stochastic gradient descent. For an observed entry \((j,i)\in\Omega\),
the per-entry gradients are:
\begin{equation} \label{eqn-update-p}
    \text{for } p_j:\quad
    \nabla f_{p_j}(p_j, q_i)
    = -2 \bigl(R_{ji} - p_j q_i^{\top}\bigr) q_i + 2\lambda p_j,
\end{equation}
\begin{equation} \label{eqn-update-q}
    \text{for } q_i:\quad
    \nabla f_{q_i}(p_j, q_i)
    = -2 \bigl(R_{ji} - p_j q_i^{\top}\bigr) p_j + 2\lambda q_i.
\end{equation}
In our multi-view extension, the item-factor update incorporates the similarity
regularizer from Equation~\ref{eqn-multi-view}, yielding
\begin{equation*}
\label{eqn-mvgmc-update}
\nabla f_{q_i}(p_j, q_i)
= -2\,(R_{ji} - p_j^{\top} q_i)\, p_j
  + 2\beta \sum_{i'=1}^n \bigl( S_{ii'}\, q_i - S_{ii'}\, q_{i'} \bigr),
\end{equation*}
where \(\sum_{i'} S_{ii'} (q_i - q_{i'})\) enforces similarity consistency
among items. These formulas are standard from an optimization viewpoint; rest of this subsection focuses on how \projectname turns them into a
high-throughput, conflict-aware GPU kernel.

\noindent\textbf{Sampling the TSID Template Submatrix.}
Stage~I does not operate on the full matrix. Instead, it only sees a
submatrix \(R^*\) obtained by sampling columns (or rows) from the original \(R\):
\(
R^* = \{ R_{ij} \mid j \in \mathcal{C},\, (i,j)\in\Omega \},
\)
where \(\mathcal{C}\) is the sampled column set with $|\mathcal{C}| = d = \lceil\alpha n\rceil$, and $k$ is the latent rank with $k \ll d \leq n$.
Under the TSID view, a well-sampled template provides the low-rank anchor
used by Stage~II, so the sampling strategy must be both cheap and robust.
We therefore use simple random column sampling. Table~\ref{tab:time} compares this choice with two heavier alternatives, \emph{rank increment} and \emph{similarity-based selection}~\cite{xie2019active}. Random sampling has negligible overhead on both workloads, while the heavier strategies are orders of magnitude slower and improve accuracy by less than \(10^{-3}\) on the synthetic workload. Thus, random sampling keeps the Stage~I startup cost negligible without measurable accuracy loss in our setting.

\noindent\textbf{Conflict-Free Parallel SGD.}
Naively parallelizing Equations~\ref{eqn-update-p}--\ref{eqn-update-q} on GPUs
creates severe write conflicts on shared item factors \(q_i\): many threads
process different users \(j\) that rated the same item \(i\), and all attempt
to update \(q_i\) concurrently.
Using global-memory atomics (e.g., \texttt{atomicAdd}) for every gradient
component serializes these updates and saturates the memory controller;
ignoring conflicts, in turn, harms convergence on power-law data with ``hot''
items.

\projectname resolves this by turning the per-entry updates into a
\emph{hierarchical reduction} over the GPU memory hierarchy.
Each thread first computes a local gradient contribution for its own copy of
\(q_i\) in registers.
Within a warp, the local contributions that target the same \(i\) are combined
using shuffle instructions, producing one partial sum per warp.
Warp partial sums are then accumulated in shared memory to obtain a
block-level delta for each \(q_i\).
Only this aggregated delta is finally applied to global memory, and we use at
most one atomic operation per block to commit it. Conflicts are absorbed at the register and shared-memory
levels, and global memory only sees a small number of coarse-grained,
commutative updates.

Formally, letting \(\nabla_{q_i} f(p_j,q_i)\) denote the per-thread
gradient, the aggregated update is
$q_{i}' = \frac{\sum_{j=1}^{k}
      \bigl(q_{i} + \eta \,\nabla_{q_i} f(p_j, q_i)\bigr)}{n} 
            = q_i + \eta\,\frac{\sum_{j=1}^{k}\nabla_{q_i} f(p_j, q_i)}{n} .$
The corresponding parallel update rules are: 
\begin{equation}\label{eqn-9}
p_{j}' \;=\; p_{j}\;+\;\eta \,\nabla_{p_j} f\bigl(p_j,q_i\bigr) ,~
q_{i}' \;=\; q_{i}\;+\;\eta \,\frac{\nabla_{q_i} f\bigl(p_j,q_i\bigr)}{n} ,
\end{equation}
where \(1/n\) factor normalizes the step size for the number of
contributing threads.
We refer to this scheme as \emph{conflict-free hierarchical synchronization}:
fine-grained, high-contention updates never directly contend in global memory,
and the remaining global atomics operate only on aggregated deltas.

\begin{figure}[t]
\centering
\includegraphics[width=0.86\columnwidth]{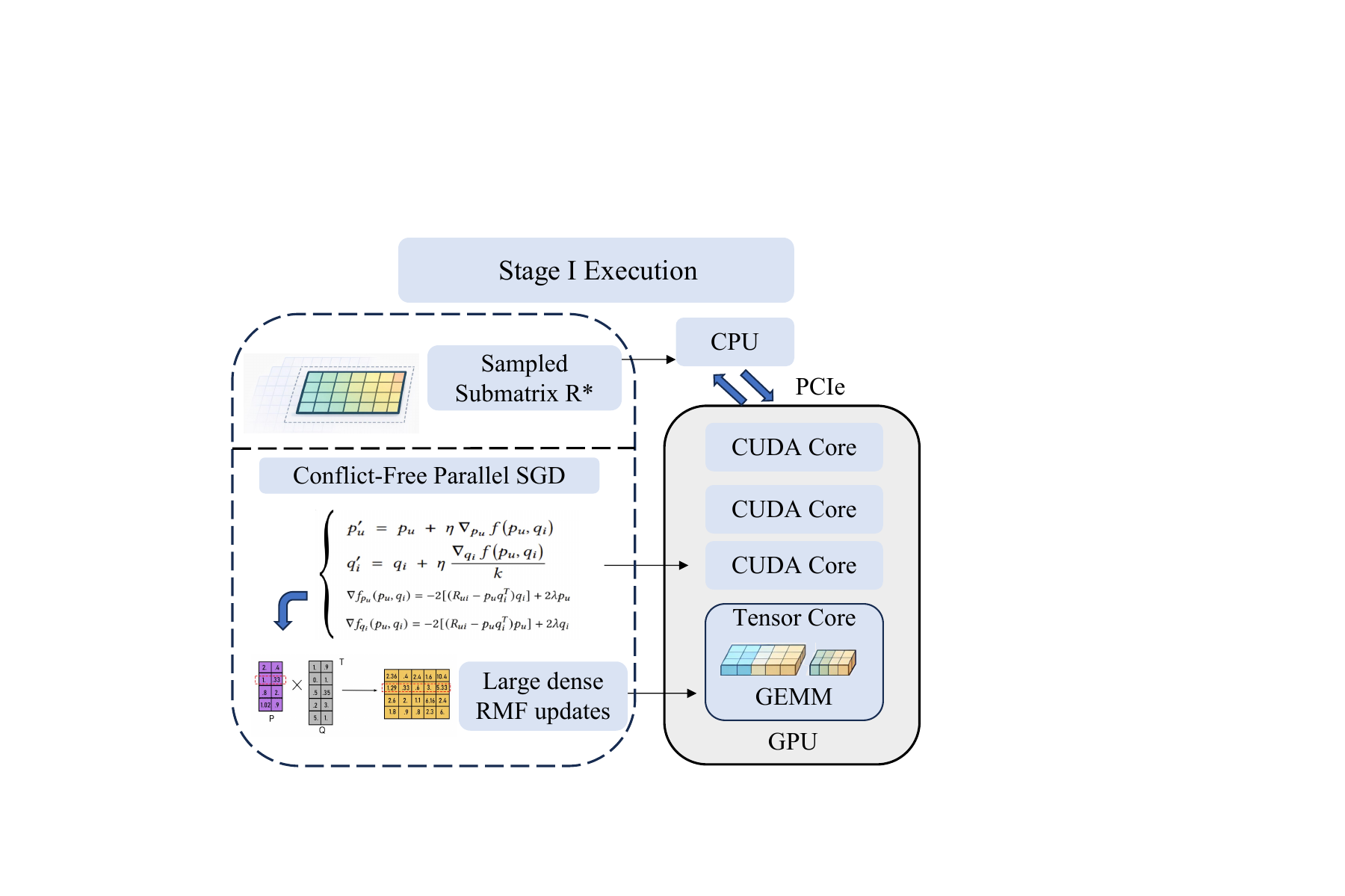}
\caption{Stage~I as a conflict-free engine: sparse RMF updates are aggregated and mapped onto Tensor Core tiles.}
\label{stage1structure}
\end{figure}

\noindent\textbf{Regularizing Sparse RMF Updates for Tensor Cores.}
The SGD updates in RMF are \emph{irregular} in the sense that each rating
touches a different pair of user and item factors, so threads access
non-contiguous rows of \(P\) and \(Q\) and issue fine-grained scatter/gather
operations instead of reading or writing contiguous tiles.
Even with conflict-aware aggregation, such sparse, index-based updates would
keep the kernel memory-bound unless we can reshape them into dense matrix
operations that fully utilize Tensor Cores.
\projectname therefore batches RMF updates into small matrix–matrix
multiplications over the low-rank factors rather than over the sparse rating
matrix itself.
In each SGD iteration, we group a mini-batch of users and items into dense
tiles and express the dominant computations as matrix multiplications
\(\Delta J = \mathrm{MatMul}(P, Q)\).
Tensor Cores perform FP16 multiplications with FP32 accumulation, providing
high throughput without sacrificing accuracy.
We implement this kernel in PyTorch using Automatic Mixed Precision (AMP),
which manages loss scaling and precision transitions, while the sparse access
pattern is confined to lightweight gather/scatter around the GEMMs.
Figure~\ref{stage1structure} illustrates the Stage~I workflow as a GPU-resident aggregation-and-compute engine: sparse RMF accesses are handled by efficient sparse-matrix primitives, while the dominant factor computations are mapped to dense Tensor Core kernels for \(\Delta J=\mathrm{MatMul}(P,Q)\).

\noindent\textbf{Stage~I Execution Example.}
Given a sparse input matrix \(R \in \mathbb{R}^{n_1 \times n_2}\),
Stage~I first samples subcolumns according to \(\alpha\) to form
\(R^* \in \mathbb{R}^{n_1 \times d}\) with \(d = \alpha n_2\).
We then initialize \(P \in \mathbb{R}^{n_1 \times k}\) and
\(Q \in \mathbb{R}^{d \times k}\) and run the GPU-parallel SGD kernel based on
\eqref{eqn-9} until \(R^* \approx P Q^{\top}\) converges.
This kernel is fully GPU-resident and is the main consumer of GPU FLOPs in Stage~I; hierarchical aggregation keeps global memory traffic and atomic operations low, allowing Tensor Cores to run in a compute-bound regime.

\subsection{Roofline-Driven Pipelined Dataflow}
\label{sec:opt2}

Once Stage~I has recovered the TSID template $\widehat{R^*}$ and its factors, Stage~II must reconstruct the full matrix under tight GPU-memory constraints. From a systems perspective, Stage~II is a \emph{pipelined dataflow engine} whose goal is to stream chunks through the GPU at a rate determined by a roofline-style balance between computation and PCIe bandwidth.

\noindent\textbf{Chunk-Based ALS Kernel.}
We use a chunked ALS-style reconstruction.
The full matrix \( \widehat{R} \in \mathbb{R}^{m\times n} \) is recovered from a restored submatrix \( \widehat{R^*} \in \mathbb{R}^{m\times \alpha n} \) (the template) using the objective in Equation~\eqref{eqn-10}:
\begin{equation}\label{eqn-10}
\mathcal{L}(C)
= \sum_{(i,j)\in\Omega}\Bigl(R_{ij} - \widehat{R^*}_i\, C_j\Bigr)^2
+\lambda\|C\|_F^2,
\end{equation}
where \(C\in\mathbb{R}^{\alpha n\times n}\) is the extension coefficient matrix whose \(j\)-th column \(C_j\) represents column \(j\) in the fixed template basis \(\widehat{R^*}\), and \(\Omega\) indexes observed entries.
We update \(C\) using gradient descent with the Adam optimizer~\cite{kingma2014adam}:
\(
C \leftarrow C -\eta \cdot \mathrm{Adam}\Bigl(\nabla_C \mathcal{L}(C)\Bigr),
\)
where \(\eta\) is the learning rate.

To avoid materializing all columns of \(C\) on the GPU, we partition \(R\) (and thus \(C\)) into column chunks of width \(\Delta c\), and process each chunk independently given the fixed template.




\noindent\textbf{Double-Buffered Pipelining and Iteration Cost.}
We implement Stage~II as a double-buffered pipeline that overlaps GPU computation with data transfers (see Figure~\ref{stage II}). While the GPU processes the ALS gradient step on chunk \(i\), chunk \(i+1\) is asynchronously prefetched via PCIe, and the results of chunk \(i-1\) are written back to CPU memory.

\noindent\textbf{Semantic Memory Management.}
Chunking provides a tunable memory footprint, which allows
\projectname to operate on matrices that are far larger than GPU memory.
The full matrix and $C$ reside in CPU memory, and only one or two chunks are
resident on the GPU at a time. The runtime uses multiple CUDA streams to
overlap three activities: (1) host-to-device prefetch of the next chunk
into a pre-allocated device buffer, (2) ALS computation on the current chunk, and (3) device-to-host flush of the previous chunk into a ring buffer allocated in pinned host memory. In contrast, the template $\widehat{R^*}$ and its factors $(P,Q)$ are read-mostly and reused across all chunks, so they are pinned in GPU HBM in Stage~II. This \emph{semantic} split, e.g., template vs. streaming chunks, treats different data structures according to their access patterns rather than managing all memory uniformly.

\noindent\textbf{Tensor Core Acceleration in Stage~II.}
The ALS update repeatedly multiplies $\widehat{R^*}$ with column blocks of $C$.
We express these operations as batched matrix multiplications that run on
Tensor Cores in mixed precision.  PyTorch AMP handles FP16/FP32
transitions and dynamic loss scaling.  Compared to a pure CUDA-core
implementation, this reduces the ALS compute time by up to an order of
magnitude on our largest datasets without loss in reconstruction accuracy.

\noindent\textbf{Stage~II Execution Example.}
As shown in Figure~\ref{stage II}, after Stage~I the submatrix \(R^*\) is fixed and serves as the template for Stage~II. To reconstruct the full matrix, we partition the original matrix into blocks that fit in GPU memory, then load each block onto the GPU in turn. For each block, we run ALS with \(R^*\) fixed while updating the factor matrix \(C \in \mathbb{R}^{\alpha n\times n}\) as in Equation~\ref{eqn-10}, repeating until target accuracy or a maximum iteration count is reached. Once a block is completed, it is offloaded to CPU memory and the next block starts, as illustrated in Figure~\ref{stage II}. 

\begin{figure}[t]
\centering
\includegraphics[scale=0.15]{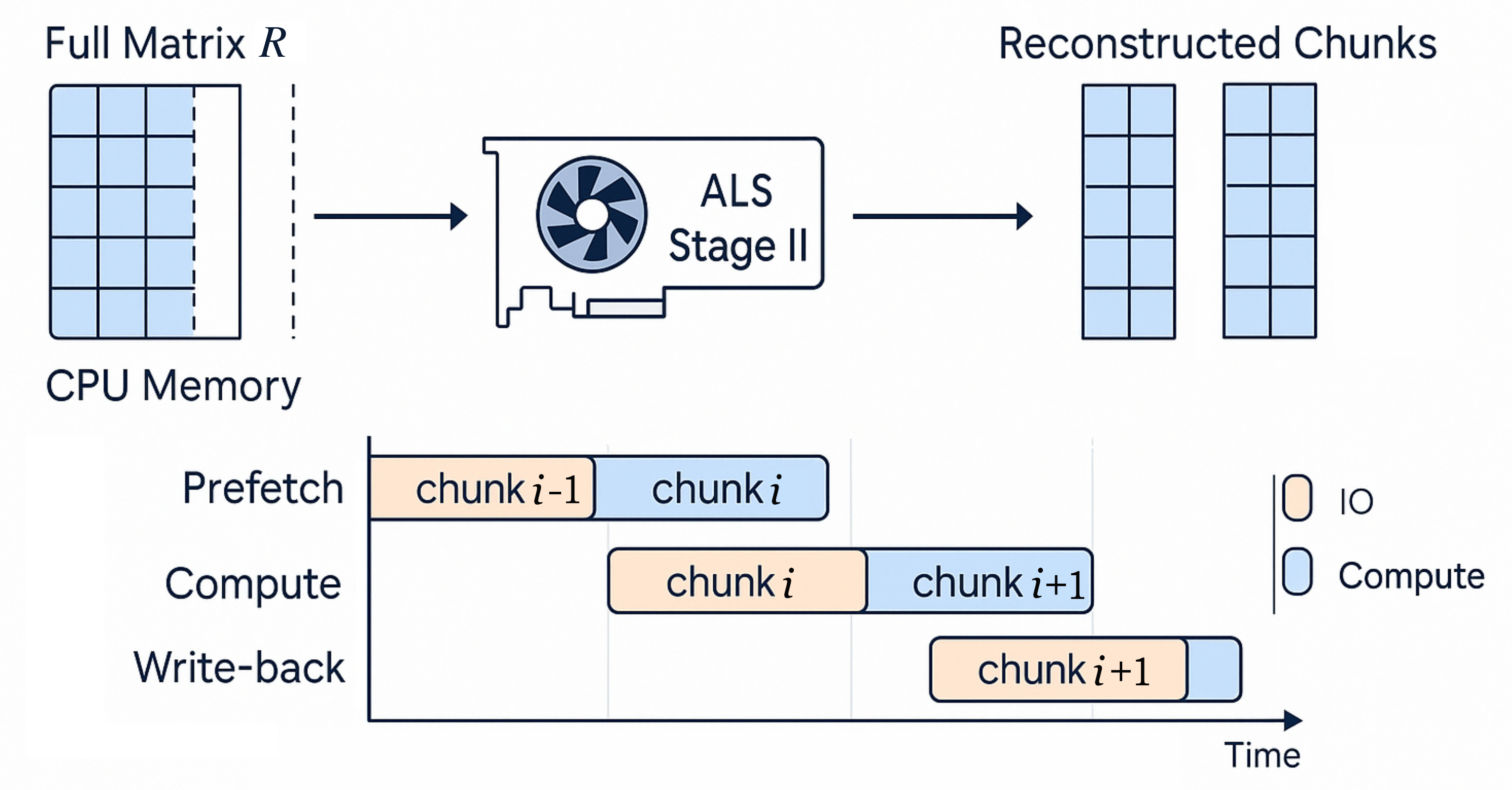}
\caption{Stage~II as a roofline-driven pipelined dataflow engine: ALS compute on chunk $i$ overlaps with prefetch of $i+1$ and flush of $i-1$.}
\label{stage II}
\end{figure}

\subsection{Asymmetric Gradient Guardrail}
\label{sec:opt3}

Although SGD-based RMF delivers high throughput, its Stage~I updates can become numerically unstable or even diverge when gradients grow large and unbounded, especially under FP16 execution on Tensor Cores. Instead of treating this purely as an ``optimization trick'', \projectname integrates gradient clipping as a \emph{numerical stabilization guardrail} that enables aggressive mixed-precision computation without sacrificing the convergence of the whole algorithm.

\noindent\textbf{Asymmetric Gradient Clipping.}
To ensure stable convergence and enhance accuracy, we refine the SGD updates in Equation~\ref{eqn-9} by rewriting them as:
\begin{equation*}\label{eqn-9-again}
\begin{aligned}
p_{j}' \;&=\; p_{j} \;+\; \eta \,\min \Bigl(\max \bigl(\xi,\,-\nabla_{p_j} f(p_j,q_i)\bigr),\,\nabla_{p_j} f(p_j,q_i)\Bigr),\\
q_{i}' \;&=\; q_{i} \;+\; \frac{\eta }{n}\,\min \Bigl(\max \bigl(\xi,\,-\nabla_{q_i} f(p_j,q_i)\bigr),\,\nabla_{q_i} f(p_j,q_i)\Bigr),
\end{aligned}
\end{equation*}
where \(\xi > 0\) is the clipping threshold. For each coordinate, positive gradients larger than \(\xi\) are truncated to \(\xi\), while negative gradients are left unchanged. This \emph{asymmetric} clipping lets parameters shrink freely but caps explosive growth.

Formally, if $\nabla f > \xi$, the update is forced to use $\xi$; multiplying by the step size \(\eta\) then guarantees
\(
  \|\Delta p\|_\infty \le \eta\,\xi
\)
per coordinate, preventing exploding updates and numerical overflow. In FP16, where the representable range is limited, this bound is critical: it ensures that even in early epochs with a large reconstruction error, the RMF kernel stays within a safe dynamic range. Combined with AMP's loss scaling, this guardrail allows \projectname to double effective memory bandwidth and compute throughput compared to a pure FP32 implementation while maintaining stable convergence.

\noindent\textbf{Effect on Convergence.}
Figure~\ref{fig:grad_clip_discrete} compares the loss over iterations with
and without clipping.  Without clipping (blue), the loss oscillates and
fails to converge, as large gradients repeatedly overshoot the optimum.
With clipping (red), the loss descends smoothly and stabilizes near the
minimum.  In our end-to-end experiments, clipping consistently improves
accuracy and reduces the number of iterations needed to reach a target
reconstruction error, with negligible runtime overhead. From a system viewpoint, this means the mixed-precision engine can be run in a ``fail-safe'' mode: even aggressive TSID sampling and high learning rates do not cause catastrophic divergence.

\begin{figure}[t]
  \centering
  \begin{tikzpicture}[scale=1.0]
    \begin{axis}[
        name=unclipped,
        width=0.5\columnwidth,
        height=0.35\columnwidth,
        title={Without Clipping},
        xlabel={Iteration},
        ylabel={Loss},
        grid=both,
        font=\scriptsize,
        ymin=0.4, ymax=1.0,
        xmin=0, xmax=100,
      ]
      \addplot[
        mark=*, mark size=1.5pt, thick, blue
      ] coordinates {
        (0,0.80) (10,0.75) (20,0.85) (30,0.70)
        (40,0.90) (50,0.65) (60,0.95) (70,0.60)
        (80,0.92) (90,0.62) (100,0.88)
      };
    \end{axis}

    \begin{axis}[
        at={(unclipped.east)},
        anchor=west,
        xshift=1cm,
        width=0.5\columnwidth,
        height=0.35\columnwidth,
        font=\scriptsize,
        title={With Clipping},
        xlabel={Iteration},
        grid=both,
        ymin=0, ymax=1.0,
        xmin=0, xmax=100,
      ]
      \addplot[
        mark=*, mark size=1.5pt, thick, red
      ] coordinates {
        (0,0.80) (10,0.70) (20,0.60) (30,0.55)
        (40,0.50) (50,0.48) (60,0.47) (70,0.46)
        (80,0.45) (90,0.44) (100,0.44)
      };
    \end{axis}
  \end{tikzpicture}
  \caption{
Comparison of loss over iterations with and without gradient clipping (threshold \( \xi \)).
}
\label{fig:grad_clip_discrete}
\end{figure}

\section{Evaluation}~\label{sec:eva}

\stitle{Implementation Details.} \projectname\ is implemented in about 7{,}000 lines of Python. Using PyTorch’s device management, inputs are automatically moved to the target device, so all tensor operations in parallel SGD and ALS routines run on GPU. The code is organized as a modular, object-oriented framework that encapsulates submatrix sampling, iterative factorization, and gradient clipping, with logging and visualization utilities for debugging and performance analysis. For Tensor Core acceleration, we use PyTorch~\cite{paszke2019pytorch} with CUDA~12.4~\cite{wilt2013cuda}, and run critical batched matrix multiplications in mixed precision to fully exploit NVIDIA Tensor Cores.
%

\subsection{Experimental Setup}

\textbf{Test Platform.} Our experiments are conducted on a server equipped with two Intel(R) Xeon(R) Gold 6426Y processors with 512GB of CPU RAM and 32 threads, running Ubuntu 20.04.6 LTS with CUDA 12.4. All methods that require a graphics processing unit are executed on four NVIDIA A100 GPUs, each equipped with 40GB of HBM.

\noindent\textbf{Datasets.}
We evaluate on both synthetic and real-world datasets (Table~\ref{table-dataset}). Synthetic datasets emulate the high sparsity of real recommenders, with 80\% of user--item entries missing. We generate each synthetic matrix as the product of two randomly sampled low-rank factor matrices, \(P Q^\top\), and then randomly mask entries to the target sparsity, ensuring that the underlying complete matrix is low-rank. They range from \(1\text{K} \times 1\text{K}\) to \(100\text{K} \times 150\text{K}\) and are used to study behavior under varying sparsity and matrix sizes.
For real-world workloads, we include both image and recommendation datasets. Image Construction and Noise Clear support an image reconstruction and denoising case study, with qualitative results shown in Figure~\ref{noise}; both are derived from grayscale bridge images from the public PxHere repository, with pixels treated as matrix entries~\cite{krajewska2024matrix}. Last.fm 2K is a music recommendation dataset containing user--artist interactions and auxiliary tag information~\cite{cantador2011hetrec}. MovieLens 25M and MovieLens Latest 33M provide large-scale user--movie rating matrices from the GroupLens benchmark suite~\cite{harper2015movielens}. Netflix Prize contains 100{,}480{,}507 ratings from 480{,}189 users on 17{,}770 movies~\cite{NetflixPrize}. Yahoo! Music, used in the 2011 KDD Cup, has 262{,}810{,}175 ratings from 200{,}000 users on 136{,}000 music items and is a standard benchmark for large-scale recommendation~\cite{yahoo2023dataset}.

\noindent\textbf{Data Model.}
Input matrices are represented in the coordinate (COO) format~\cite{chou2018format}, which stores only the row indices, column indices, and values of non-zero elements. This representation is space-efficient for sparse matrices.

\begin{table}[t]
	\centering
        \caption{Datasets analysis.}
    \label{table-dataset}
    \resizebox{0.99\columnwidth}{!}{ 
	
	\begin{tabular}{|c|c|c|c|c|}
		\hline
        {\begin{tabular}[c]{@{}c@{}}Real-World\\ Dataset\end{tabular}}& \textbf{Users} & \textbf{Items} & \textbf{Ratings} & \textbf{Size}\\
		\hline
        Image Construction~\cite{krajewska2024matrix} & 1080 & 1920 & - & 3.2MB\\
		\hline
        Noise Clear~\cite{krajewska2024matrix} & 771 & 621 & - & 44KB\\
        \hline
        Last.fm 2K~\cite{cantador2011hetrec} & 1.9K & 17K & -- & 123MB\\
		\hline        MovieLens 25M~\cite{harper2015movielens} & 162K & 59K & 25M & 35.6GB\\
        \hline
        MovieLens Latest 33M~\cite{harper2015movielens} & 331K & 83K & 33M & \textbf{102.4GB}\\
        \hline
        Netflix Prize~\cite{NetflixPrize} & 480,189 & 17,770 & 100,480,507 & 31GB \\
		\hline
		Yahoo! Music~\cite{yahoo2023dataset} & 200,000 & 136,000 & 262,810,175 & {101GB}\\
		\hline
        \hline
         \multirow{9}{*}{\begin{tabular}[c]{@{}c@{}}Synthetic\\ Dataset\end{tabular}}  & \textbf{Row} & \textbf{Column} & \textbf{Sparsity} & \textbf{Size}\\
        \cline{2-5}
       
        & 1K & 1K & \multirow{8}{*}{80\%} & 3.8MB\\
        \cline{2-3} \cline{5-5}
        & 1K & 3K & & 11MB\\
        \cline{2-3} \cline{5-5}
        & 3K & 3K & & 34MB\\
        \cline{2-3} \cline{5-5}
        & 3K & 10K & & 114MB\\
        \cline{2-3} \cline{5-5}
        & 10K & 100K & & 3.72GB \\
        \cline{2-3} \cline{5-5}
        & 50K & 100K & & 18.6GB\\
        \cline{2-3} \cline{5-5}
        & 100K & 100K & & 37.2GB\\
        \cline{2-3} \cline{5-5}
        & 100K & 150K & & 55.87GB\\
        \hline
	\end{tabular}
    }

\end{table}

\noindent\textbf{Baselines and Variants.}
We compare \projectname against five external state-of-the-art (SOTA) matrix-completion solvers
and several internal variants.
For NN-centric methods, we use \textbf{CSNN} as the nuclear-norm SOTA baseline.
For SVD-centric, kernel-based, and graph-based methods, we use \textbf{CSPGD}, \textbf{CGM},
\textbf{NTK}, and \textbf{MVGMC} as SOTA baselines on their respective tasks.
Implementations of CSNN, CSPGD, and CGM follow the open-source code from~\cite{krajewska2024matrix};
NTK uses the implementation from~\cite{radhakrishnan2022simple};
MVGMC follows~\cite{koohi2019parallel}.
These five methods appear in Table~\ref{over} as the \emph{external SOTA baselines}.

\noindent\textbf{Accuracy Evaluation Metrics.}
To measure accuracy, we adopt the normalized mean absolute error (NMAE)~\cite{chicco2021coefficient}, which is widely used in matrix completion studies~\cite{wen2012lmafit,todeschini2013probabilistic,krajewska2024matrix,xiong2022traffic}. Given a held-out test set of observed entries $\Omega_{\text{test}}$, NMAE is defined as
$\frac{1}{|\Omega_{\text{test}}|(m_{\max}-m_{\min})}\sum_{(i,j)\in\Omega_{\text{test}}} |\widehat{R}_{ij}-R_{ij}|$,
where $m_{\max}$ and $m_{\min}$ are the maximum and minimum values in $R$. Lower NMAE indicates higher reconstruction accuracy.

\noindent\textbf{Parameter Selection.}
We typically set the subcolumn selection ratio (i.e., the percentage of total columns selected for each submatrix) to 0.5 for small datasets to ensure high accuracy and to 0.1 for large-scale datasets to balance memory usage and computational efficiency. The number of chunks is generally chosen to be between 10 and 30 to achieve optimal performance (detailed in Section~\ref{eva:sen}).

\begin{table*}[t]
\centering
\caption{End-to-end performance comparison in terms of accuracy (NMAE) and runtime (seconds). The symbol \textcolor{red}{\ding{55}} denotes unavailable results due to excessive runtime or missing implementations. NTK requires a strict training setup, making it unavailable on most datasets. We set $\alpha = 0.1$ for all experiments below.}

\label{over}
\resizebox{\textwidth}{!}{
\begin{tabular}{|l|c|c|c|c|c|c|c|c|c|c|c|c|}
\hline
\multicolumn{1}{|c|}{\multirow{2}{*}{\begin{tabular}[c]{@{}c@{}}Dataset\end{tabular}}} & \multicolumn{2}{c|}{CSNN} & \multicolumn{2}{c|}{CSPGD} & \multicolumn{2}{c|}{CGM} & \multicolumn{2}{c|}{MVGMC} & \multicolumn{2}{c|}{NTK} & \multicolumn{2}{c|}{\textbf{\projectname (Ours)}} \\
\cline{2-3} \cline{4-5} \cline{6-7} \cline{8-9} \cline{10-11} \cline{12-13}
& NMAE & Time (s) & NMAE & Time (s) & NMAE & Time (s) & NMAE & Time (s) & NMAE & Time (s) & NMAE & Time (s) \\
\hline
1K$\times$3K & \colorbox{green}{\textbf{1.08 × $\textbf{10}^{\textbf{-6}}$}} & \colorbox{pink}{{21.40K} (\textbf{8136x})} & 0.0014 & 51.00 (19.3x) & 0.0028 & 532.00 (202x) & \colorbox{pink}{0.3662} & 7.98 (3.03x)& \textcolor{red}{\ding{55}} & \textcolor{red}{\ding{55}} & 0.0008 & \colorbox{green}{\textbf{2.63}} \\
\hline
3K$\times$3K & \colorbox{green}{{\textbf{0.0013}}} & \colorbox{pink}{{140.00K} (\textbf{11647x})} & 0.0029 & 116.00 (9.65x) & \colorbox{pink}{0.8419} & 487.00 (40.5x) & 0.3953 & 34.55(2.87x) & \textcolor{red}{\ding{55}} & \textcolor{red}{\ding{55}} & 0.0027 & \colorbox{green}{\textbf{12.02}} \\
\hline
3K$\times$10K & \textbf{0.0020} & \colorbox{pink}{{260.00K} (\textbf{10400x})} & 0.0020 & 145.00 (5.8x) & \colorbox{pink}{0.8438} & 579.00 (23.1x) & 0.3450 & 141.93 (5.67x) & \textcolor{red}{\ding{55}} & \textcolor{red}{\ding{55}} & \colorbox{green}{\textbf{0.0019}} & \colorbox{green}{\textbf{25.00}} \\
\hline
Image Construction & 0.0668 & {4.62K} (872x) & 0.0474 & \colorbox{pink}{{7.30K} (\textbf{1377x})} & 0.0672 & 90.00 (17x) & \colorbox{pink}{0.2447} & 19.23 (3.62x) & 0.1100 & 21.00 (3.96x) & \colorbox{green}{\textbf{0.0454}} & \colorbox{green}{\textbf{5.31}} \\
\hline
Last.fm 2K & 0.4134 & \colorbox{pink}{{495.40} (\textbf{643x})} & 0.2281 & 43.90 & 0.1899 & 1.24 (17x) & 0.1900 & 3.52 (3.62x) & \colorbox{pink}{0.4530} & 0.85 (3.96x) & \colorbox{green}{\textbf{0.1496}} & \colorbox{green}{\textbf{0.77}} \\
\hline
MovieLens 25M & \textcolor{red}{\textbf{OOM}} & \textcolor{red}{\textbf{OOM}} & \textcolor{red}{\textbf{OOM}} & \textcolor{red}{\textbf{OOM}} & \textcolor{red}{\textbf{OOM}} & \textcolor{red}{\textbf{OOM}} & \textcolor{red}{\textbf{OOM}} & \textcolor{red}{\textbf{OOM}} & \textcolor{red}{\ding{55}} & \textcolor{red}{\ding{55}} & \colorbox{green}{\textbf{0.1352}} & \colorbox{green}{\textbf{2.88K}} \\
\hline
MovieLens Latest 33M & \textcolor{red}{\textbf{OOM}} & \textcolor{red}{\textbf{OOM}} & \textcolor{red}{\textbf{OOM}} & \textcolor{red}{\textbf{OOM}} & \textcolor{red}{\textbf{OOM}} & \textcolor{red}{\textbf{OOM}} & \textcolor{red}{\textbf{OOM}} & \textcolor{red}{\textbf{OOM}} & \textcolor{red}{\ding{55}} & \textcolor{red}{\ding{55}} & \colorbox{green}{\textbf{0.1385}} & \colorbox{green}{\textbf{5.04K}} \\
\hline
Netflix Prize & \textcolor{red}{\textbf{OOM}} & \textcolor{red}{\textbf{OOM}} & \textcolor{red}{\textbf{OOM}} & \textcolor{red}{\textbf{OOM}} & \textcolor{red}{\textbf{OOM}} & \textcolor{red}{\textbf{OOM}} & \textcolor{red}{\textbf{OOM}} & \textcolor{red}{\textbf{OOM}} & \textcolor{red}{\ding{55}} & \textcolor{red}{\ding{55}} & \colorbox{green}{\textbf{0.1382}} & \colorbox{green}{\textbf{53.91}} \\
\hline
Yahoo! Music & \textcolor{red}{\textbf{OOM}} & \textcolor{red}{\textbf{OOM}} &\textcolor{red}{\textbf{OOM}} & \textcolor{red}{\textbf{OOM}} & \textcolor{red}{\textbf{OOM}} & \textcolor{red}{\textbf{OOM}} & \textcolor{red}{\textbf{OOM}} & \textcolor{red}{\textbf{OOM}} & \textcolor{red}{\ding{55}} & \textcolor{red}{\ding{55}} & \colorbox{green}{\textbf{0.1473}} & \colorbox{green}{\textbf{1.01K}} \\
\hline
\end{tabular}
}
\end{table*}


\subsection{Overall Performance}

Table~\ref{over} compares \projectname\ with five SOTA baselines (CSNN, CSPGD, CGM, MVGMC, and NTK) on diverse matrix completion tasks using NMAE and end-to-end runtime.
\projectname\ consistently achieves the best overall efficiency while maintaining strong accuracy. On mid-sized matrices, it delivers up to \textbf{11,647$\times$} speedup over NN-centric methods with comparable accuracy. Against SVD-centric methods, it achieves up to \textbf{1,377$\times$} speedup and up to \textbf{99.7\%} lower NMAE.
For accuracy, on a 1K×3K matrix, \projectname\ achieves an NMAE of 0.00075, improving over CSPGD and CGM by \textbf{46.4\%} and \textbf{73.1\%}, respectively. On a 3K×3K matrix, it attains an NMAE of 0.0027, nearly matching CSNN and outperforming CGM by \textbf{99.7\%}. On image reconstruction, it achieves an NMAE of 0.04543, improving over CSNN and CGM by \textbf{31.9\%} and \textbf{32.4\%}. These gains mainly come from gradient clipping, which stabilizes training and reduces reconstruction error.

\noindent\textbf{Real-World Large Dataset Testing.}
On MovieLens (25M and Latest 33M), Netflix and Yahoo! Music datasets, CSNN, CSPGD, CGM, and MVGMC fail with out-of-memory errors due to full-matrix execution. NTK is not reported because its required training setup is unavailable for these datasets. In contrast, \projectname\ completes Netflix with NMAE 0.1382 in 53.91s and Yahoo! Music with NMAE~0.1473 in 1.01Ks. Although Yahoo! Music and MovieLens Latest 33M have comparable storage sizes, their runtimes differ substantially, indicating that end-to-end time is affected not only by data size but also by matrix shape, sparsity pattern, and chunk-level reconstruction cost. On Last.fm 2K, \projectname\ achieves the best NMAE of 0.1496 and finishes in 0.77s, outperforming the slowest highlighted baseline CSNN by \textbf{643$\times$}. On MovieLens 25M and MovieLens Latest 33M, all baselines run out of memory or are unavailable, while \projectname\ completes them with NMAE~0.1352 and 0.1385, respectively. Entries marked with \ding{55} in Table~\ref{over} denote such failures. Although NTK attains low NMAE on some small image tasks, its scalability is limited. Overall, \projectname\ combines strong accuracy with substantially better efficiency and scalability.

\noindent\textbf{Noise Clear.}
To complement the quantitative results, we evaluate \projectname\ on a real-world image denoising task. We compare against CSPGD~\cite{krajewska2024matrix}, the strongest baseline in Table~\ref{over}. Figure~\ref{noise} shows the original image and the reconstructions produced by CSPGD and \projectname. \projectname\ removes noise more effectively and produces a clearer reconstruction, providing evidence of improved reconstruction quality.

\begin{figure}[t]
\centering
\subfloat[Origin image.]
{
    \begin{minipage}[b]{0.25\linewidth}
        \centering
        \includegraphics[scale=0.2]{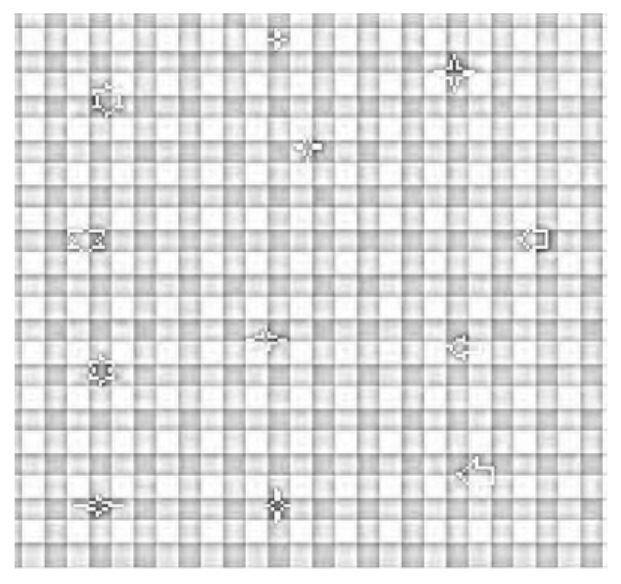}
    \end{minipage}
}
\subfloat[CSPGD.]{
    \begin{minipage}[b]{0.25\linewidth}
        \centering
        \includegraphics[scale=0.2]{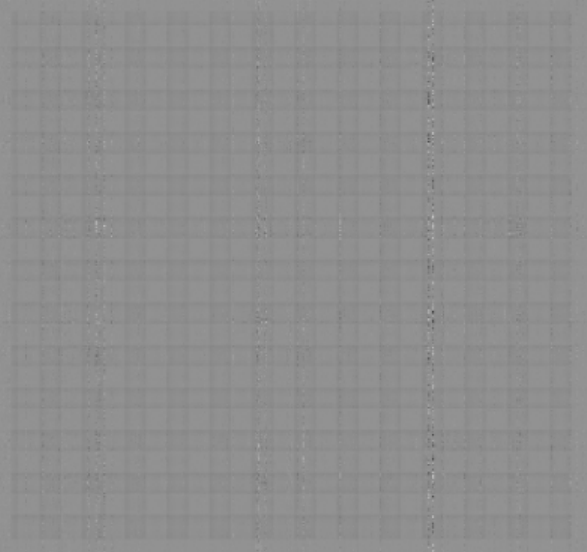}
    \end{minipage}
}
\subfloat[\projectname (Ours).]{
    \begin{minipage}[b]{0.25\linewidth}
        \centering
        \includegraphics[scale=0.2]{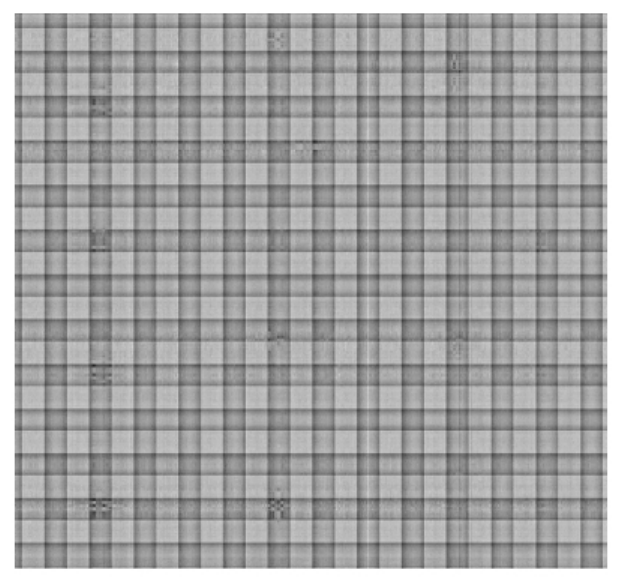}
    \end{minipage}
}
\caption{Qualitative comparison for image denoising.}
\label{noise}
\end{figure}

\begin{table}[t]
\centering
\caption{Profiled GPU memory usage in GB. \projectname uses the chunk-based pipeline; $>100$ indicates that the method exhausts the largest available GPU-memory budget (more than 100GB) and fails with OOM.}
\label{tab:memory_usage_combined}
\footnotesize
\begin{tabular}{|l|c|c|c|c|c|}
\hline
\textbf{Dataset} & \textbf{CSNN} & \textbf{CSPGD} & \textbf{CGM} & \textbf{MVGMC} & \textbf{\projectname} \\
\hline
1K $\times$ 3K & 0.84 & 0.80 & 0.83 & 0.76 & \textbf{0.30} \\
\hline
3K $\times$ 3K & 2.15 & 0.84 & 1.44 & 0.78 & \textbf{0.47} \\
\hline
3K $\times$ 10K & 6.74 & 0.90 & 4.95 & 3.22 & \textbf{0.79} \\
\hline
Last.fm 2K & 4.2 & 0.95 & 3.4 & 1.6 & \textbf{0.85} \\
\hline
MovieLens 25M & $>100$ & $>100$ & $>100$ & $>100$ & \textbf{27} \\
\hline
MovieLens 33M & $>100$ & $>100$ & $>100$ & $>100$ & \textbf{38} \\
\hline
Netflix Prize & $>100$ & $>100$ & $>100$ & $>100$ & \textbf{22} \\
\hline
Yahoo! Music & $>100$ & $>100$ & $>100$ & $>100$ & \textbf{23} \\
\hline
\end{tabular}
\end{table}

\subsection{Memory Usage Comparison}

Table~\ref{tab:memory_usage_combined} reports the profiled GPU memory usage on synthetic and real-world recommendation workloads. On synthetic matrices, \projectname keeps the resident footprint below 0.8GB, while baselines require up to 6.74GB. On the 3K $\times$ 10K, \projectname reduces memory usage from 6.74GB for CSNN and 4.95GB for CGM to only 0.79GB. On Last.fm 2K, \projectname uses 0.85GB, lower than all baselines. On larger real-world workloads, CSNN, CSPGD, CGM, and MVGMC exceed the largest available GPU-memory budget in our testbed and fail with OOM, whereas \projectname completes within 27GB on MovieLens 25M, 38GB on MovieLens Latest 33M, 22GB on Netflix, and 23GB on Yahoo! Music. Overall, the chunk-based pipeline keeps the resident working set within a single 40GB A100 GPU and enables larger-scale completion than full-matrix baselines.

\begin{figure}[t]
  \centering
  \resizebox{0.9\columnwidth}{!}{%
    \begin{tikzpicture}[]
      \begin{axis}[
          scale = 0.8,
          at={(-0.4,0)},
          anchor=origin,
          width=0.4\textwidth,
          height=0.4\textwidth,
          font=\large,
          axis x line=bottom,
          axis y line=left,
          xlabel={Matrix Size},
          ylabel={Time (s)},
          xtick={1,2,3,4,5},
          xticklabels={1K$\times$1K,\,1K$\times$3K,\,3K$\times$3K,\,3K$\times$10K,\,30$\times$10K},
          ymin=0, ymax=30,
          grid=both,
          grid style={dashed,gray!30},
          legend style={
           font=\large,
            at={(0.5,1.1)},
            anchor=south,
            legend columns=2,
            /tikz/every even column/.append style={column sep=0.5cm},
            draw=none
          },
          ybar=0.1cm,   
          bar width=0.15cm,  
          x=1.5cm,    
          enlarge x limits=0.1,  
          yticklabel style={/pgf/number format/fixed},
          xticklabel style={font=\small},
        ]
        \addplot coordinates {(1,6.2) (2,8.30) (3,22.6) (4,29.6) (5,6.5)};
        \addplot coordinates {(1,4.20) (2,4.71) (3,19.5) (4,26.6) (5,5.46)};
        \addplot coordinates {(1,4.56) (2,3.58) (3,18.85) (4,22.95) (5,2.2)};
        \addplot coordinates {(1,3.02) (2,2.63) (3,12.02) (4,20.5) (5,1.74)};
        \legend{Base, Base + Opt1, Base + Opt2, Base + Opt1 + Opt2(\projectname)}
        
      \end{axis}
      
     \begin{axis}[
        scale=0.8,
        width=0.45\textwidth,
        height=0.4\textwidth,
        at={(7.2cm,0)},  
        axis x line=bottom,
        axis y line=left,
        xlabel={Iterator},
        ylabel={NMAE},
        ymin=0.007, ymax=2,
        grid=both,
        font=\large,
        grid style={dashed,gray!30},
        legend style={
            at={(0.6,1.1)},
            anchor=south,
            font=\large,
            legend columns=1,
            /tikz/every even column/.append style={column sep=0.5cm},
            draw=none
        },
        enlarge x limits=0.2,
        yticklabel style={/pgf/number format/fixed, /pgf/number format/precision=3}, 
    ]
        \addplot [color=blue, thick] coordinates {
            (1, 1.8)
            (2, 1.6)
            (3, 1.4)
            (4, 1.2)
            (14, 0.45)
            (16, 0.4)
            (17, 0.38)
            (30, 0.248)
            (31, 0.245)
            (32, 0.243)
            (38, 0.228)
            (41, 0.22)
            (42, 0.218)
            (47, 0.205)
            (48, 0.203)
            (49, 0.20)
            (54, 0.188)
            (55, 0.185)
            (66, 0.158)
            (67, 0.155)
            (68, 0.153)
            (69, 0.15)
            (74, 0.138)
            (80, 0.123)
            (81, 0.12)
            (82, 0.118)
            (83, 0.115)
            (84, 0.113)
            (91, 0.095)
            (92, 0.093)
            (101, 0.07)
            (102, 0.068)
            (103, 0.065)
            (104, 0.063)
            (105, 0.06)
            (118, 0.028)
            (119, 0.025)
            (120, 0.023)
        };
        \addplot [color=red, thick] coordinates {
            (1, 2.0)
            (2, 1.8)
            (3, 1.65)
            (4, 1.5)
            (5, 1.4)
            (6, 1.3)
            (7, 1.2)
            (16, 0.8)
            (17, 0.78)
            (18, 0.76)
            (19, 0.74)
            (20, 0.72)
            (21, 0.71)
            (32, 0.643)
            (33, 0.64)
            (34, 0.638)
            (35, 0.635)
            (36, 0.633)
            (37, 0.63)
            (38, 0.628)
            (46, 0.608)
            (47, 0.605)
            (48, 0.603)
            (49, 0.60)
            (69, 0.55)
            (70, 0.548)
            (71, 0.545)
            (72, 0.543)
            (73, 0.54)
            (74, 0.538)
            (90, 0.498)
            (91, 0.495)
            (92, 0.493)
            (93, 0.49)
            (98, 0.478)
            (99, 0.475)
            (100, 0.473)
            (101, 0.47)
            (102, 0.468)
            (110, 0.448)
            (111, 0.445)
            (112, 0.443)
            (113, 0.44)
            (117, 0.43)
            (118, 0.428)
            (119, 0.425)
            (120, 0.423)
        };
        \legend{With Opt3, Without Opt3}
    \end{axis}
    \end{tikzpicture}%
  }
  \caption{Ablation study of \projectname configurations: \texttt{Base} (no optimizations), \texttt{Opt1} (conflict-free synchronization RMF), \texttt{Opt2} (chunk‑based pipelining), and \texttt{Opt3} (gradient clipping).}
  \label{fig:combined_merged}
\end{figure}

\subsection{Efficiency of \projectname's Optimizations}
We conduct an ablation study (Figure~\ref{fig:combined_merged}) to evaluate each optimization in \projectname on matrix sizes from 1K$\times$1K to 3K$\times$10K, plus an additional wide-matrix microbenchmark of 30$\times$10K. Across all sizes, we compare the base RMF+ALS version and different optimization combinations in execution time (left) and NMAE (right).

\noindent\textbf{Efficiency of Opt1.} Opt1, the conflict-free synchronization optimization in Stage I (Section~\ref{sec:opt1}), consistently reduces execution time across all matrix sizes, with larger benefits on bigger matrices. For example, it cuts runtime on a $1\mathrm{K}\times3\mathrm{K}$ matrix from 8.3s to 4.71s, with up to a \textbf{43\%} reduction overall.

\noindent\textbf{Efficiency of Opt2.} Opt2, the roofline-driven pipelined dataflow in Stage II (Section~\ref{sec:opt2}), consistently outperforms the base version, reducing runtime on a $30\times10\mathrm{K}$ matrix from 6.5s to 2.2s (\textbf{65\%}). Its chunk-based parallelism is especially effective when rows are much fewer than columns, making it well suited to wide real-world matrices.

\noindent\textbf{Efficiency of Opt3.} Opt3, the asymmetric gradient guardrail in Section~\ref{sec:opt3}, improves convergence and accuracy. Unless otherwise specified, we set the clipping threshold to \(\xi=0.1\) in all Opt3 experiments. As shown in Figure~\ref{fig:combined_merged}, clipping reduces NMAE from 2.0 to 1.8 at the first iteration and from 0.423 to 0.023 at the 120th iteration (94\% relative improvement). It also reaches NMAE=0.1 in about 60 iterations versus 100 without clipping, giving a \textbf{40\%} speedup.

\noindent\textbf{Efficiency of Tensor Cores.}
Figure~\ref{fig:combined_two_plots} compares the runtime with and without Tensor Core acceleration. We disable Tensor Cores with \texttt{TF\_DISABLE\_CUBLAS\_} \texttt{TENSOR\_OP\_MATH=1}, forcing all kernels onto CUDA cores. Tensor Cores consistently run faster, with the gap widening for larger matrices: for a ($16\mathrm{K}\times16\mathrm{K}$) matrix, the runtime drops from 30.7s (CUDA cores) to 22.18s (Tensor Cores), a \textbf{27.7\%} reduction that highlights the advantage for matrix computations.

\begin{figure}[t]
  \centering
  \begin{tikzpicture}
    \begin{groupplot}[
        group style={
          group size=2 by 1,
          horizontal sep=0.55cm
        },
        width=0.40\columnwidth,
        height=0.28\columnwidth,
        scale only axis,
        ybar,
        ymin=0,
        enlarge x limits=0.16,
        axis line style={black!70, line width=0.5pt},
        tick style={black!70, line width=0.5pt},
        ymajorgrids=true,
        x tick label style={
  rotate=-15,
  yshift=2pt,
},
        grid style={dashed, gray!25},
        xlabel={Matrix Size},
        xlabel style={font=\scriptsize},
        tick label style={font=\tiny},
        xticklabel style={align=center, font=\tiny},
      ]

      \nextgroupplot[
        ylabel={Time (s)},
        ylabel style={font=\scriptsize, at={(0.15,0.5)}},
        xtick={1,2,3,4},
        xticklabels={
          16K$\times$16K,
          32K$\times$32K,
          64K$\times$64K,
          80K$\times$80K
        },
        ymax=300,
        ytick={0,100,200,300},
        legend style={
          draw=none,
          font=\tiny,
          at={(0.03,0.97)},
          anchor=north west,
          legend columns=1
        },
        title={(a) Tensor Core vs.\ CUDA Core.},
        title style={
          at={(0.5,-0.44)},
          anchor=north,
          font=\footnotesize
        },
      ]
      \addplot[
        draw=blue!70!black,
        fill=blue!25,
        bar width=4.5pt,
        bar shift=-2.7pt
      ] coordinates {
        (1,30.7) (2,62.16) (3,158.21) (4,283.4)
      };
      \addplot[
        draw=red!70!black,
        fill=red!20,
        bar width=4.5pt,
        bar shift=2.7pt
      ] coordinates {
        (1,22.18) (2,47.88) (3,136.25) (4,216.7)
      };
      \legend{CUDA Core, Tensor Core}

      \nextgroupplot[
        ylabel={},
        xtick={1,2,3,4},
        xticklabels={
          {10K$\times$100K},
          {50K$\times$100K},
          {100K$\times$100K},
          {100K$\times$150K}
        },
        ymax=520,
        ytick={0,100,200,300,400,500},
        legend style={
          draw=none,
          font=\tiny,
          at={(0.03,0.97)},
          anchor=north west,
          legend columns=1
        },
        title={(b) Evaluation on multiple GPUs.},
        title style={
          at={(0.5,-0.44)},
          anchor=north,
          font=\footnotesize
        },
      ]
      \addplot[
        draw=blue!70!black,
        fill=blue!25,
        bar width=3.2pt,
        bar shift=-3.8pt
      ] coordinates {
        (1,36.18) (2,167.07) (3,373.39) (4,507)
      };
      \addplot[
        draw=red!70!black,
        fill=red!20,
        bar width=3.2pt,
        bar shift=0pt
      ] coordinates {
        (1,22.12) (2,98.4) (3,205.33) (4,340.8)
      };
      \addplot[
        draw=brown!70!black,
        fill=brown!20,
        bar width=3.2pt,
        bar shift=3.8pt
      ] coordinates {
        (1,12.9) (2,58.4) (3,110.19) (4,180.84)
      };
      \legend{1 GPU, 2 GPU, 4 GPU}

    \end{groupplot}
  \end{tikzpicture}
  \caption{Performance comparison under different hardware configurations.}
  \label{fig:combined_two_plots}
\end{figure}

\subsection{Comparison of RMF with SVD In Stage I}
\label{sec:comparison_concise}

\begin{figure}[t]
\centering
\resizebox{0.8\columnwidth}{!}{%
\begin{tikzpicture}[]
\begin{axis}[
        scale = 0.7,
        font=\large,
        at={(-0.4,0)},
        anchor=origin,
        width=\columnwidth,
        height=0.7\columnwidth,
        xlabel={Matrix Size (K)},
        ylabel={Time (s)},
        xmin=20, xmax=100,
        ymin=0, ymax=10,
        legend style={
            at={(1.2,1.02)},
            anchor=south,
            legend columns=-1,
            /tikz/every even column/.append style={column sep=0.5cm},
            draw=none,
            font=\large
        },
        grid=major
]
\addplot[color=red, mark=square*, mark size=1.5pt] coordinates {(10, 2.0) (20, 2.2) (30, 2.2) (40, 2.1) (60, 2.1) (80, 2.1) (100, 2.1)}; 
\addplot[color=blue, mark=*, mark size=1.5pt] coordinates {(10, 3.7) (20, 3.4) (30, 3.5) (40, 4.9) (60, 6.6) (80, 7.2) (100, 8.6)}; 
 \legend{RMF (Ours), SVD}
\end{axis}
\begin{axis}[
        scale = 0.7,
        at={(6.9cm,2.6cm)}, 
        anchor=origin,
        font=\large,
       width=\columnwidth,
        height=0.7\columnwidth,
        xlabel={Matrix Size (K)},
        ylabel={NMAE},
        xmin=10, xmax=100,
        ymin=0, ymax=3.5,
        legend pos=north west,
        grid=major,
 ymode=log,
log basis y={10} 
]
\addplot[color=red, mark=square*, mark size=1.5pt] coordinates {(10, 0.9) (20, 0.05) (40, 0.008) (60, 0.005) (80, 0.004) (100, 0.00369)};
\addplot[color=blue, mark=*, mark size=1.5pt] coordinates {(10, 3.3) (20,3.32) (40, 2.7) (60, 2.71) (80, 2.7) (100, 2.66)}; 
    \end{axis}
  \end{tikzpicture}%
 }
 \caption{Time and accuracy comparison: RMF vs. SVD.}
 \label{fig:combined_convergence_scalability}
\end{figure}

To quantify the benefit of RMF in Stage~I, we replace it with a well-optimized SVD-based implementation and compare runtime and accuracy under the same submatrix-recovery setting.
As shown in Figure~\ref{fig:combined_convergence_scalability}, our RMF-based method reduces NMAE by up to \textbf{99.6\%} and is 4.09$\times$ faster than SVD. It also converges more smoothly, quickly, and stably due to efficient parallel SGD-based RMF optimization. Moreover, as matrix size increases, our accuracy improves while SVD remains nearly unchanged, reflecting the suitability of RMF for submatrix recovery and the benefit of gradient clipping.

\subsection{Multi-GPU Scalability.}

\projectname scales across multiple GPUs. In \textbf{Stage~I}, the template submatrix $R'$ is recovered by SGD-based RMF on a single GPU. This phase is lightweight and memory-bound, so multi-GPU parallelism brings limited benefit. In \textbf{Stage~II}, \projectname exploits multi-GPU parallelism by distributing column chunks across GPUs in a chunk-based pipeline, where each GPU processes a disjoint subset and results are merged to reconstruct the matrix. Asynchronous prefetching and write-back overlap I/O with computation to maximize utilization.

We evaluate \projectname with 1, 2, and 4 GPUs (Figure~\ref{fig:combined_two_plots}) on multiple matrix sizes. Runtime decreases as GPUs increase. For the smallest problem, runtime drops from {36.18s} on 1 GPU to {22.12s} on 2 GPUs (\textbf{38.9\%}), and to {18.90s} on 4 GPUs (\textbf{47.8\%}). For a \(100\text{K}\times150\text{K}\) matrix, runtime falls from {507s} to {340.80s} (\textbf{32.8\%}), and further to {220.84s} (\textbf{56.4\%}). These results show that \projectname\ distributes work effectively across GPUs with modest overhead, and that its chunk-based design scales well to larger problems.


\subsection{Evaluation on Large-Scale Matrix}

\begin{figure}[t]
  \centering
  \resizebox{0.8\columnwidth}{!}{
    \begin{tikzpicture}[scale=1]
      \pgfmathsetmacro{\xone}{1}
      \pgfmathsetmacro{\xtwo}{2}
      \pgfmathsetmacro{\xthree}{3}
      \pgfmathsetmacro{\xfour}{4}
      
      \begin{axis}[
          name=plot1,
width=1.0\columnwidth, 
        height=0.5\columnwidth, 
          axis x line=bottom,
          axis y line=left,
          font=\small,
          ylabel={Time (s)},
          xlabel={Matrix Size},
          xtick={\xone,\xtwo,\xthree,\xfour},
          xticklabels={10K$\times$100K, 50K$\times$100K, 100K$\times$100K, 100K$\times$150K},
          xticklabel style={align=center},
          xmin=0.5, xmax=4.5,  
          ymin=0, ymax=550,
          ytick={0,100,200,300,400,500},
          grid=both,
          grid style={dashed,gray!30},
          legend style={
            at={(0.2,1)},
            font=\small,
            anchor=south west,
            legend columns=1,
            draw=none
          },
        ]
        \addplot+[
          ybar, 
          bar width=0.5cm,
          bar shift=0pt,  
          fill=blue!80!,
          thick,
          opacity=0.4,
        ] coordinates {
          (\xone, 36.18) 
          (\xtwo, 167.07) 
          (\xthree, 373.39) 
          (\xfour, 507)
        };
        \addlegendentry{Time}
      \end{axis}
      
      \begin{axis}[
          name=plot2,
          at={(plot1.south west)},
          anchor=south west,
width=1.0\columnwidth, 
        height=0.5\columnwidth, 
          axis x line=none,
          axis y line=right,
          font=\small,
          ylabel={NMAE},
          xtick=\empty,  
          xmin=0.5, xmax=4.5,  
          ymin=0, ymax=4,
          ytick={0,1,2,3,4},
          yticklabel style={/pgf/number format/fixed},
          legend style={
            at={(0.8,1)},
            anchor=south east,
            legend columns=1,
            draw=none,
            font=\small,
          },
        ]
        \addplot[
          mark=square*,
          thick,
          mark options={fill=orange},
          mark size=3pt,
        ] coordinates {
          (\xone, 2.45) 
          (\xtwo, 2.43) 
          (\xthree, 1.86) 
          (\xfour, 3.2)
        };
        \addlegendentry{Accuracy}
      \end{axis}
    \end{tikzpicture}
  }
  \caption{Evaluation on large scale matrices.}
  \label{fig:large_scale}
\end{figure}

Figure~\ref{fig:large_scale} shows the execution time and NMAE of \projectname\ on large matrices, where all baselines run out of memory. Runtime increases from 36.18s on 10K$\times$100K to 507s on 100K$\times$150K, while NMAE does not degrade monotonically: it stays around 2.4 for 10K$\times$100K and 50K$\times$100K, then rises to \textbf{3.2$\times$} on 100K$\times$150K. This suggests that, under current hyper-parameters, \projectname\ performs best near the 100K$\times$100K scale.



\subsection{Parameter Sensitivity}~\label{eva:sen}

\begin{figure}[t]
\centering
\resizebox{\columnwidth}{!}{%
\begin{tikzpicture}
    \begin{axis}[
        scale=0.7,
        font=\normalsize,
        at={(-7cm,0cm)},
        anchor=origin,
        width=0.7\columnwidth,
        height=0.7\columnwidth,
        xlabel={Subcolumn Ratio},
        ylabel={Time (s)},
        xmin=0, xmax=1.0,
        ymin=0, ymax=3,
        grid=major
    ]
    \addplot[color=red, mark=square*, smooth, mark size=1.5pt] coordinates {
        (0.01, 2.49) (0.03, 2.34) (0.05, 1.83) (0.07, 1.36) 
        (0.1, 0.847) (0.3, 0.75) (0.5, 0.45) (0.7, 0.792) (1.0, 1.1)
    };
    \end{axis}
    
    \begin{axis}[
        scale=0.7,
        font=\normalsize,
        at={(-3.0cm,0cm)},
        anchor=origin,
        width=0.7\columnwidth,
        height=0.7\columnwidth,
        xlabel={Number of Chunks},
        xmin=5, xmax=30,
        ymin=0, ymax=2,
        grid=major
    ]
    \addplot[color=blue, mark=square*, smooth, mark size=1.5pt] coordinates {
        (5, 2.02) (10, 0.66) (15, 1.3) (20, 1.4) (25, 1.52) (30, 1.85)
    };
    \end{axis}

    \begin{axis}[
         scale=0.7,
        font=\normalsize,
        at={(2.0cm,0cm)},
        anchor=origin,
        width=0.7\columnwidth,
        height=0.7\columnwidth,
        xlabel={Subcolumn Ratio},
        ylabel={NMAE},
        xmin=0.05, xmax=1.0,
        ymin=0, ymax=1,
        grid=major
    ]
    \addplot[color=green, mark=square*, mark size=1.5pt] coordinates {
        (0.015, 0.0221) (0.031, 0.0258) (0.052, 0.0356) (0.073, 0.0825) (0.14, 0.1774) (0.5, 0.35) 
        (0.6, 0.3) (0.7, 0.28) (0.8, 0.26) (0.9, 0.24) (1.0, 0.22)
    };
    \end{axis}
    
    \begin{axis}[
        scale=0.7,
        font=\normalsize,
        at={(6cm,0cm)},
        anchor=origin,
        width=0.7\columnwidth,
        height=0.7\columnwidth,
        xlabel={Number of Chunks},
        xmin=0, xmax=30,
        ymin=0, ymax=0.02,
        grid=major
    ]
    \addplot[color=purple, mark=square*, mark size=1.5pt] coordinates {
        (5, 0.02) (10, 0.009) (15, 0.008) (20, 0.005) (25, 0.004) (30, 0.00369)
    };
    \end{axis}

\end{tikzpicture}%
}
\caption{Sampling ratio and chunk number sensitivity.}
\label{fig:Time}
\end{figure}

\noindent\textbf{Effect of Subcolumn Sampling Ratio.} The subcolumn sampling ratio determines the proportion of columns selected during Stage~I. As shown in Figure~\ref{fig:Time}, increasing the ratio from 0.01 to 0.1 leads to a sharp drop in runtime, i.e., from 2.49s to 0.847s, because the submatrix captures enough structure to enable faster convergence. At very low ratios (e.g., 0.01), the submatrix lacks sufficient information, resulting in more iterations and higher computational costs. From 0.1 to 0.5, runtime continues to decrease slowly, reaching a minimum of 0.45s at 0.5. This range balances representativeness and overhead. However, beyond 0.5, runtime increases again, up to 1.1s at a ratio of 1.0 due to the growing cost of processing a large submatrix. While a larger submatrix reduces Stage~II workload, the overhead of handling more data offsets these gains, ultimately reducing overall efficiency. Thus, for small datasets, a ratio of 0.5 typically yields the best performance. For large-scale datasets, however, higher ratios incur excessive memory usage; therefore, we adopt a default ratio of 0.1 to balance efficiency and memory consumption.

\noindent\textbf{Parameter in Stage I.} The number of chunks determines how the matrix is partitioned in Stage~II for full-matrix reconstruction using chunk-based pipelined ALS. While it does not affect accuracy (per TSID theory), it strongly impacts runtime by influencing memory usage and parallelism. As shown in Figure~\ref{fig:nmae}, increasing the chunk count from 5 to 10 reduces runtime from 2.02s to 0.66s. With only 5 chunks, each chunk is large, limiting parallelism and increasing memory pressure, which slows computation. At 10 chunks, the matrix is split into smaller parts, enabling better parallel processing, a reduced memory footprint per chunk, and effective I/O-computation overlap. However, further increasing the chunk count leads to an increase in runtime due to overheads such as frequent CPU-GPU data transfers and more synchronization across tasks. This diminishes the benefits of parallelism and leads to performance degradation. For the \(1\text{K} \times 10\text{K}\) dataset, 10 chunks offer the best trade-off. While the optimal count may vary with dataset size, 10 remains a robust default for balancing efficiency and scalability.

\noindent\textbf{Effect of Chunk Count.} In the worst case, submatrix sampling may degrade performance under highly nonuniform missing patterns. However, in real-world large-scale datasets, the likelihood of this occurring is very low. Our random sampling selects 10\% of columns, which reduces the chance of encountering severe missingness. To validate robustness, we performed 1,000 random sampling trials on a $1\text{K} \times 1\text{K}$ dataset. The performance variance across runs was less than 0.02\%, confirming the stability of our approach.

\begin{figure}[t]
\centering
\resizebox{\columnwidth}{!}{%
\begin{tikzpicture}
    \begin{axis}[
        scale=0.4,
        font=\large,
        at={(0cm,0cm)},
        anchor=origin,
        width=\columnwidth,
        height=\columnwidth,
        xlabel={regularization ($\lambda \times 10^{-2}$)}, 
        ylabel={NMAE},
        xmin=0.01, xmax=0.100,
        ymin=0, ymax=0.013,
        grid=major,
        xtick={0.01, 0.02, 0.03, 0.04, 0.05, 0.06, 0.07, 0.08, 0.09, 0.10}, 
        xticklabels={1, 2, 3, 4, 5, 6, 7, 8, 9, 10} 
    ]
    \addplot[color=red, mark=square*, smooth, mark size=1.5pt] coordinates {
        (0.01, 0.00489) (0.02, 0.00381) (0.03, 0.00176) (0.04, 0.00206) (0.05, 0.00244) (0.06, 0.00259) (0.07, 0.00306) (0.08, 0.00525) (0.09, 0.00662) (0.10, 0.00778)
    };
    \end{axis}

    \begin{axis}[
        scale=0.4,
        font=\large,
        at={(5cm,0cm)},
        anchor=origin,
        width=\columnwidth,
        height=\columnwidth,
        xlabel={learning rate ($\eta \times 10^{-2}$)}, 
        ylabel={NMAE},
        xmin=0.01, xmax=0.10,
        ymin=0, ymax=0.0030,
        grid=major,
        xtick={0.01, 0.02, 0.03, 0.04, 0.05, 0.06, 0.07, 0.08, 0.09, 0.10}, 
        xticklabels={1, 2, 3, 4, 5, 6, 7, 8, 9, 10} 
    ]
    \addplot[color=blue, mark=square*, smooth, mark size=1.5pt] coordinates {
        (0.01, 0.00165) (0.02, 0.00155) (0.03, 0.00167) (0.04, 0.00157) (0.05, 0.00162) (0.06, 0.00165) (0.07, 0.00173) (0.08, 0.00176) (0.09, 0.00178) (0.10, 0.00176)
    };
    \end{axis}

    \begin{axis}[
        scale=0.4,
        font=\large,
        at={(10cm,0cm)},
        anchor=origin,
        width=\columnwidth,
        height=\columnwidth,
        xlabel={clipping grad($\lambda \times 10^{-1}$)}, 
        ylabel={NMAE},
        xmin=0.5, xmax=1.200,
        ymin=0, ymax=0.0034,
        grid=major,
        xtick={0.5, 0.6, 0.7, 0.8, 0.9, 1.0 , 1.1 , 1.2}, 
        xticklabels={1, 2, 3, 4, 5, 6, 7, 8, 9, 10} 
    ]
    \addplot[color=purple, mark=square*, smooth, mark size=1.5pt] coordinates {
        (0.5, 0.00185) (0.6, 0.00174) (0.7, 0.00169) (0.8, 0.00174) (0.9, 0.00175) (1.0, 0.00176) (1.1, 0.00173) (1.2, 0.00176)
    };
    \end{axis}
\end{tikzpicture}%
}
\caption{NMAE sensitivity.}
\label{fig:nmae}
\end{figure}

\subsection{Limitation and Impact}
Our design assumes that the sampled template submatrix \(R^*\) fits in GPU memory. For extremely large problems, even the template may exceed device capacity, and we do not yet provide an out-of-core implementation for Stage~I. Extending the template construction to out-of-core or multi-GPU settings is an important direction for future work.

On the positive side, by reducing matrix completion to a template-resident, streamable workflow, \projectname enables billion-scale completion on a single node and improves hardware efficiency on modern accelerators. This design can benefit a range of data-driven applications, such as recommendation systems and scientific data recovery, where large sparse matrices must be processed under tight memory budgets.

\section{Related Work}
\label{sec:related}

\noindent\textbf{Matrix Completion Algorithms.}
Early theoretical work established near-exact recovery guarantees using convex nuclear-norm relaxations~\cite{candes2009exact,cai2010singular,krajewska2024matrix,shin2014calibrationless} and nonconvex low-rank factorizations of the form \(R \approx P Q^\top\)~\cite{keshavan2010matrix, koohi2019parallel,radhakrishnan2022simple}. Subsequent Bayesian formulations introduced sparsity-promoting priors over factors to improve robustness and uncertainty estimation~\cite{babacan2012sparse}. More recent work has continued to strengthen the theory of matrix completion, including nearly linear-time robust alternating minimization~\cite{gu2024robustam} and distribution-free uncertainty quantification via conformalized matrix completion~\cite{gui2023conformalized}.

\noindent\textbf{Other Matrix Completion Systems.}
To handle web-scale recommendation workloads, several systems parallelize matrix factorization across clusters or heterogeneous devices. A block-partitioned distributed SGD algorithm for matrices with billions of entries~\cite{gemulla2011large}, and distributed variants such as DALS, ASGD, and DSGD++ that exploit both data and model parallelism~\cite{teflioudi2012distributed}. More recent systems optimizie execution on heterogeneous CPU--GPU platforms~\cite{yu2021efficient}. These systems substantially accelerate matrix factorization, but they still treat matrix completion as a whole-matrix workload whose data or factors are partitioned across workers or accelerators.

\noindent\textbf{Positioning of \projectname.}
\projectname shares the same low-rank factorization objective as above work but targets a different point in the design space.
Rather than scaling out by sharding factors across a cluster, \projectname scales up by using TSID as a resource-reduction primitive: a small template submatrix is kept in GPU memory while the remaining columns are streamed and reconstructed under a fixed memory budget.
This submatrix-centric design restructures matrix completion into a bounded-memory execution workflow, enabling single-node, billion-scale completion that can be reused across applications.

\section{Conclusion}
We presented \projectname, a submatrix-centric matrix completion system that uses TSID as a resource-reduction primitive to decouple logical matrix size from the GPU working set. This enables bounded-memory completion at billion-scale and yields up to \(11{,}647\times\) speedup with lower peak memory and higher accuracy. We hope that viewing TSID as a reusable resource-reduction primitive will inspire future systems for large-scale numerical workloads.

\bibliographystyle{ACM-Reference-Format}
\bibliography{ref}




\end{document}